 \newtheorem{theorem}{Theorem}[section]
 \newtheorem{lemma}[theorem]{Lemma}
 \newtheorem{corollary}[theorem]{Corollary}
 \newenvironment{proof}[1][Proof]{\begin{trivlist}
 		\item[\hskip \labelsep {\bfseries #1}]}{\end{trivlist}}
 \newenvironment{remark}[1][Remark]{\begin{trivlist}
 		\item[\hskip \labelsep {\bfseries #1}]}{\end{trivlist}}
 \newcommand{\qed}{\nobreak \ifvmode \relax \else
 	\ifdim\lastskip<1.5em \hskip-\lastskip
 	\hskip1.5em plus0em minus0.5em \fi \nobreak
 	\vrule height0.75em width0.5em depth0.25em\fi}
\DeclareMathOperator*{\mini}{minimize}
\DeclareMathOperator*{\maxi}{maximize}
\begin{document}
 \title{Optimized Resource Provisioning and Operation Control for Low-power Wide-area IoT Networks}

\author{Amin Azari and Meysam Masoudi and Cicek Cavdar\\
KTH Royal Institute of Technology, Email: 
\{aazari,masoudi,cavdar\}@kth.se} 
 \maketitle

 %Besides exact  expressions, closed-form upper/lowerbounds are derived.
\begin{abstract}
The tradeoff between cost of the access network and quality of offered service to IoT devices, in terms of reliability and durability of communications, is investigated. We first develop analytical tools for reliability evaluation in uplink-oriented large-scale IoT networks. These tools comprise modeling of interference from heterogeneous interfering sources with time-frequency asynchronous radio-resource usage patterns, and benefit from realistic distribution processes for modeling channel fading and locations of interfering sources. We further present a cost model for the access network as a function of provisioned resources like density of access points (APs), and a battery lifetime model for  IoT devices as a function of intrinsic parameters, e.g. level of stored energy, and network parameters, e.g. reliability of communication. The derived models represent the ways in which a required level of reliability can be  achieved by either sacrificing battery lifetime (durability), e.g. increasing number of replica transmissions, or  sacrificing network cost and increasing provisioned resources, e.g. density of the APs. Then, we investigate optimal resource provisioning and operation control strategies, where the former aims at finding the optimized investment in the access network based on the cost of each resource; while the latter aims at optimizing  data transmission strategies of IoT devices. The simulation results confirm tightness of derived analytical expressions, and show how the derived expressions can be used in finding optimal operation points of IoT networks.

%points of the networks
\end{abstract}
\begin{IEEEkeywords}
Economic viability, Reliability and durability, Coexistence, Grant-free, IoT, LPWA.
\end{IEEEkeywords}
 
\IEEEpeerreviewmaketitle
 
 %%------------------------------------------------------------
 
 \section{Introduction}
Providing  connectivity for  massive Internet-of-Things (IoT) devices is a key deriver of 5G \cite{5g_iot}. Until now, several solutions have been proposed  for enabling large-scale  IoT connectivity, including evolutionary and revolutionary solutions \cite{mag_all}.  Evolutionary solutions aim at enhancing connectivity procedure of existing LTE networks, e.g.  access reservation and scheduling improvement \cite{isl,nL}. On the other hand, revolutionary solutions  aim at providing low-overhead scalable low-power IoT connectivity by redesigning the access network. In 3GPP LTE Rel. 13, narrowband IoT (NB-IoT) has been announced as a revolutionary solution which handles communications over a 200 KHz bandwidth \cite{ciot}. This narrow bandwidth brings high link budget, and offers extended coverage \cite{ciot}.  To provide autonomous low-latency access to radio resources, grant-free access is  a study item in 3GPP IoT working groups, and it is expected to be included in future 3GPP standards \cite{gf31}. 
Thanks to the simplified connectivity procedure, and removing need for pairing and fine synchronization, grant-free radio access has attracted lots of interests in recent years  for providing  low-power ultra-durable IoT connectivity, especially when more than 10 years lifetime is required.   SigFox and LoRa are two dominant grant-free radio access solutions over the ISM-band, the industrial, scientific, and medical radio band.   While energy consumptions of LoRa and SigFox solutions is extremely low, and their provided link budget is enough to penetrate to most indoor areas, e.g. LoRa signal can be decoded when it is  20 dB less than the noise level,  reliability of their communications in coexistence scenarios is questionable \cite{int2,mey}. Regarding the growing interest in grant-free radio access, it is required to investigate the  performance of grant-free IoT networks in terms of reliability/durability of communications, expected battery lifetime of devices, and the CAPEX and OPEX of the access network.
\subsection{Literature Study}
Non-orthogonal radio access  has attracted lots of attentions  in recent years as a  complementary radio access scheme for future generations of wireless networks \cite{noma,jsacS}. In literature, non-orthogonal access has been employed in order to increase the network throughput \cite{reem},  reliability \cite{url}, battery lifetime \cite{gf}, and reduce delay \cite{reem}.  A through survey of non-orthogonal radio access nominated for 5G can be found in \cite{cat}, which categorizes the available schemes into three categories: (i) codebook-based multiple access, with codebooks  in power or code domain like sparse code and pattern division multiple access (SCMA, PDMA); (ii)   sequence-based multiple-access, using   complex number sequences like multi-user shared access (MUSA) and non-orthogonal coded multiple access (NCMA); and (iii) interleaver/scrambler-based multiple access like resource spread multiple access (RSMA). Among these schemes, MUSA and RSMA can be used in an asynchronous and grant-free mode, and RSMA has been proposed as a candidate for  grant-free access in future LTE releases \cite{rsma}.
%, these solutions mainly require sophisticated receivers and a level of computational complexity at the device side which might be a bottleneck in low-cost large-scale IoT technologies. This is due to the fact that the former increases the network cost and limits the widespread of low cost IoT connectivity; while the latter is against ultra-long battery lifetime demand.  
In \cite{miao2016MAC,adh}, grant-free non-orthogonal radio access for intra-group communication of IoT devices over cellular networks has been investigated, and  it has been shown that significant improvement in battery lifetime can be achieved with bounded  interference on communications of other cellular users. In low-power wide-area (LPWA) IoT technologies over unlicensed band, signal repetition in time and  spreading in frequency are mainly used  to combat noise and  interference while keeping device's cost and energy consumption as low as possible \cite{mag_all}.  In \cite{2d},  interference and outage probability in grant-free access has been investigated by assuming a constant received power from all contending devices, which is not the case in practice  regarding different pathloss values that different devices experience, and lack of channel state information as well as sophisticated power control at IoT-device side.  The success probability in grant-free transmission for a single cell has been analyzed in \cite{sic} by assuming a Poisson point process (PPP) distribution of  IoT devices. In \cite{mey}, the outage probability in grant-free transmission  has been analyzed by assuming Rayleigh fading and PPP distribution of IoT devices.  In \cite{gf}, a low-cost low-power grant-free radio access has been proposed, which benefits from oscillator imperfection of low-cost IoT devices for contention resolutions. Experimental performance evaluation results in \cite{int2} reflects a significant impact of interference from already installed ISM-band devices  on the performance of LPWA networks.

One sees the research on grant-free access has been mainly focused on success probability analysis in homogeneous scenarios. Furthermore, the choice of PPP for distribution of devices in LPWA IoT networks, where the cell range can be up to tens of kilometers \cite{mag_all}, leads to a loose upperbound \cite{adhoc,math,pcp}. This is   due to the fact that  there is a high density of IoT devices  in buildings, shopping centers, and etc., and a low density of nodes outside these regions. In this case,  a Poisson cluster process (PCP), which in special form reduces to PPP,  suits well modeling distribution of devices.  One sees that there is lack of research on reliability of large-scale IoT networks with multi-type devices with heterogeneous  communications characteristics and   distribution processes (Fig. \ref{fig1}).
Furthermore, there is lack of a unified approach investigating the tradeoff between economic viability, i.e. the required investment cost in the access network, and quality of service for IoT devices, in terms of  reliability and durability of communications.
\begin{figure*}[t!]
    \centering
    \begin{subfigure}[t]{0.5\textwidth}
   \centering%{0.2cm 0.01cm 1cm 0.1cm}trim=,clip,
     \includegraphics[width=2.2in]{./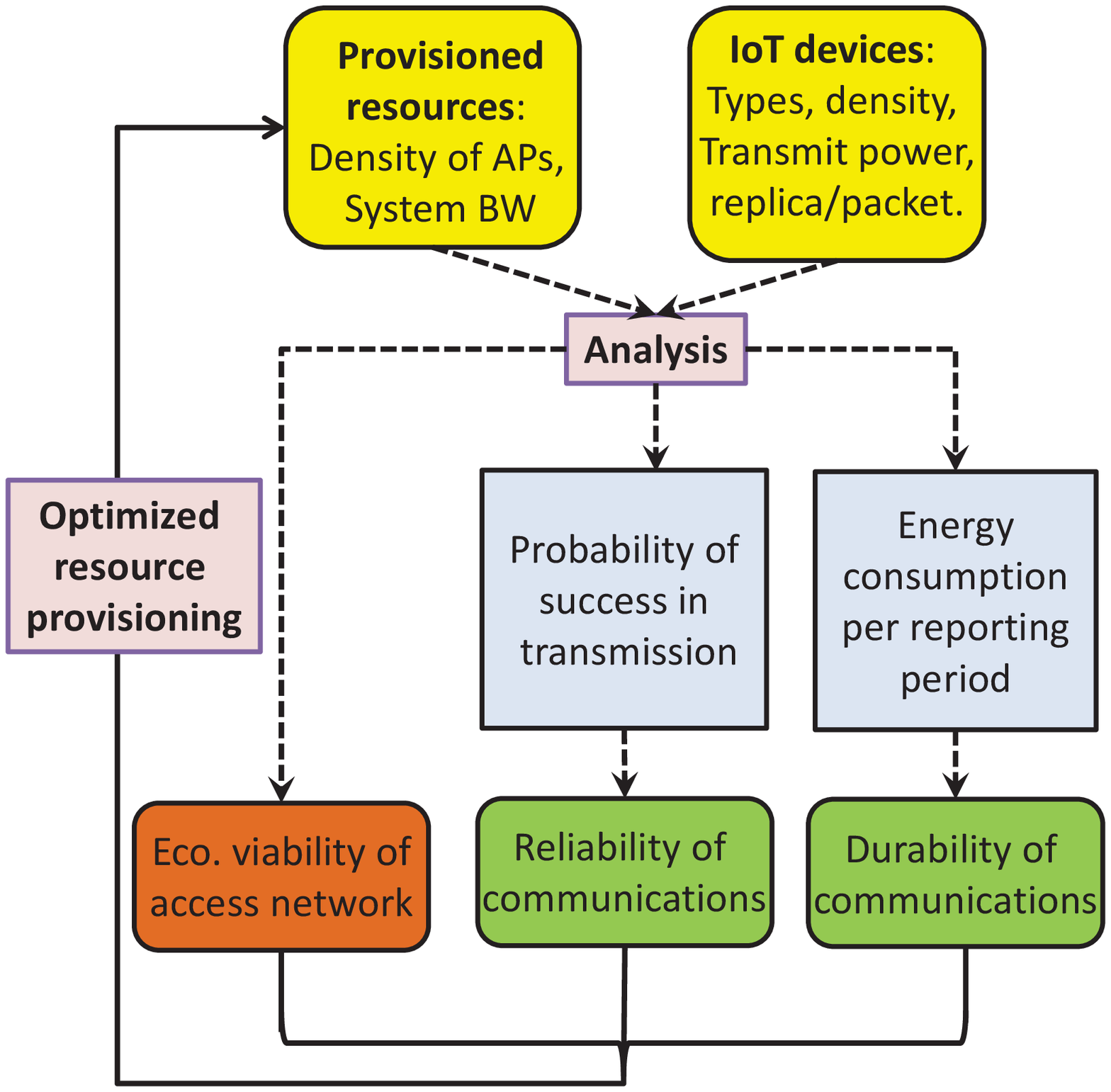}
\caption{}\label{trcn1}
    \end{subfigure}%
~
    \begin{subfigure}[t]{0.5\textwidth}
   \centering%{0.2cm 0.01cm 1cm 0.1cm}trim=,clip,
     \includegraphics[width=2in]{./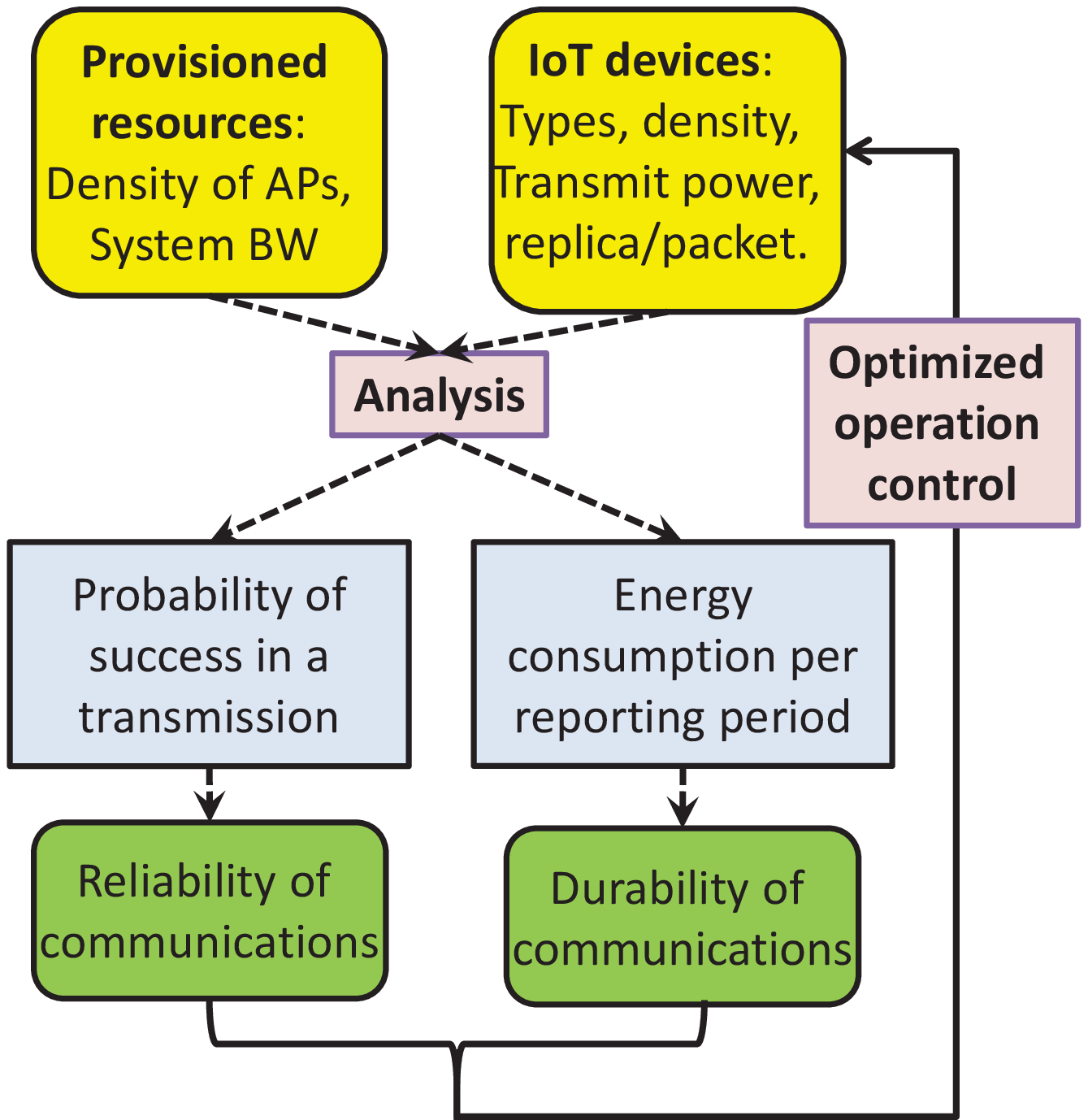}
\caption{}\label{trcn2}
    \end{subfigure}%
\caption{(a) optimized resource provisioning in the access networks for serving IoT traffic; (b) optimized operation control for IoT devices. Violet-colored boxes represent the contributions.}  \label{trcn}
\end{figure*}
 
\subsection{Motivation and Contributions}
  Here, we address an important  problem, not tackled previously: network design  in coexistence scenarios  with grant-free radio access. With coexistence, we  mean both an  IoT data aggregation solution which aggregates data from heterogeneous IoT traffic sources, e.g. a cellular IoT solution with different IoT services; and heterogeneous IoT solutions sharing a bunch of spectrum and operating together. For network design, the key performance indicators (KPIs) of interest are cost of the access network, and  reliability/durability of provided service for end-users. Enabling IoT connectivity requires deployment of access points (APs)  and allocation of frequency resources, which determine  the network costs. One the other hand, the experienced delay, consumed energy (battery), and success of IoT applications have strong couplings with reliability of data transfer, which is interconnected with the network resources. This tradeoff is investigated in this work. 
  
    The main contributions of this work include:
    \begin{itemize}
    \item
    Analytical modeling
    \begin{itemize}
    \item
Provide a rigorous analytical model of reliability in grant-free IoT connectivity  by considering large-scale IoT networks which serve multiple IoT traffic categories with heterogeneous communication characteristics and distribution processes.
     \item
    Provide  analytical model for durability of communications (lifetime of IoT application) as a function of expected battery lifetimes of devices.   Provide  analytical model for required investment cost for  the access networks as a function of provisioned resources.

     \item     
     Highlight the tradeoffs amongst the network investment  cost, durability, and reliability of communications.
     \end{itemize}
     \item
     Optimized design
     \begin{itemize}
     \item
Present reliability-constrained cost-optimized resource provisioning strategies for large-scale IoT networks. 
\item
Present reliability-constrained lifetime-optimized operation control strategies for IoT devices.
\end{itemize}
\item
Scalability analysis
\begin{itemize}
\item
Present scalability of the access network, i.e. how the provisioned radio and AP resources must be scaled with scaling of number of deployed devices, external interference, and required reliability level.
\item
Present scalability of IoT devices by investigating the point up to which IoT devices can adapt themselves to increase in co-usage of the shared medium by increasing the transmit power and number of replica transmissions.
\end{itemize}

\end{itemize}
Fig. \ref{trcn} represents a graphical illustration of contributions.

The remainder of paper has been organized as follows. System model and problem description are presented in the next section. Modeling of KPIs is presented in section III. Section IV presents the optimized resource provisioning and operation control strategies. Simulation results are presented in section V. Concluding remarks are given in section VI.

 %IoT devices are battery driven, may harvest energy, and long battery lifetime is crucial for them.

%%------------------------------------------------------------

\begin{figure*}[t!]
    \centering
    \begin{subfigure}[t]{0.5\textwidth}
   \centering%{0.2cm 0.01cm 1cm 0.1cm}trim=,clip,
     \includegraphics[width=2.5in]{./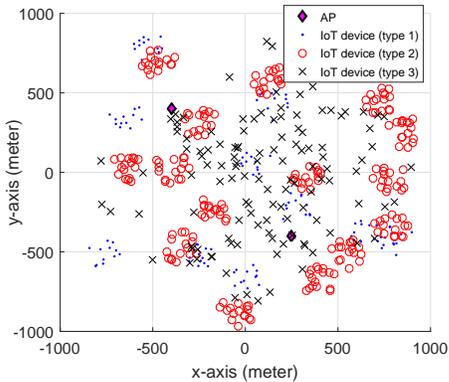}
\caption{Graphical illustration  of locations of devices  for $K=3$ which represents  heterogeneity in distribution processes of locations.
}\label{sys}
    \end{subfigure}%
~
    \begin{subfigure}[t]{0.5\textwidth}
   \centering%{0.2cm 0.01cm 1cm 0.1cm}trim=,clip,
     \includegraphics[width=3in]{./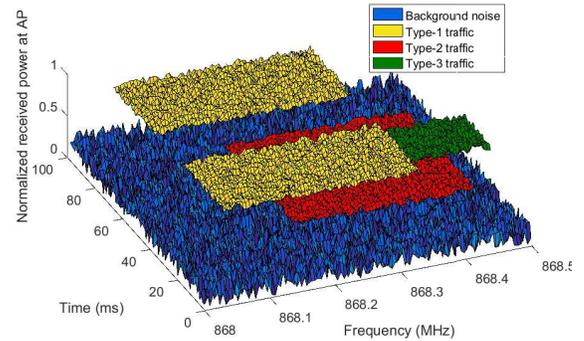}
\caption{A snapshot of received traffic from $4$ transmitters ($K=3$) at the receiver which represents  differences in  communication characteristics.}\label{crp}
    \end{subfigure}%
\caption{Graphical description of  the system model. Performance analysis and optimization with heterogeneity in communication characteristics as well as distribution processes of devices have not been  studied before.}  \label{fig1}
\end{figure*}

\section{System Model and Problem Description}
\subsection{System Model}\label{sys}
 A massive number of  IoT devices, denoted by set $\Phi$, have been distributed according to different spatial PCPs in a wide service area, as depicted in Fig. \ref{sys}. $\Phi$ comprises of $K$ subsets, $\Phi_k$ for $k\in \mathcal K \buildrel \Delta \over = \{1,\cdots, K\}$, where each subset refers to a specific type of IoT services. Traffic from different subsets differ in the way they use the time-frequency resources, i.e. in frequency of packet generation $1/T_k$, signal bandwidth $w_k$, packet transmission time $\tau_k$, number of replicas\footnote{Practical motivations for modeling such replicas can be found in newly proposed IoT technologies in which coverage extension and resilience to interference  are achieved by repetitions of transmitted packets \cite{ciot,mag_all}. When it is not the case, $n_k=1$ can be used.} transmitted per packet $n_k$, and transmit power $P_k$. Subscript $k$ refers to  the type of IoT devices.  For PCP of type-$k$ IoT traffic, $(\lambda_k, \upsilon_k, \text{f}({\bf x}))$ characterizes the process in which, $\lambda_k$ is the density of the parent process and $\upsilon_k$ the average number of daughter points per parent point, as defined in \cite{pcp}. Also, $\text{f}({\bf x})$ is an isotropic function representing  scattering density of the daughter process around a parent point, e.g. a normal distribution:
 \begin{equation}\label{nor}\text{f}({\bf x})={\exp(-||{\bf x}-{\bf x}_0||^2/(2\sigma^2))}/{\sqrt{2\pi\sigma^2}},\end{equation}
 where $\sigma$ is the variance of distribution and ${\bf x}_0$ is the location of parent point,     or a uniform distribution:
     $\text{f}({\bf x})=\text{S}(||{\bf x}-{\bf x}_0||)/(\pi R_c^2),$
      where $R_c$ is the cluster radius, $\text{S}(x)$ is 0 for $x>R_c$, and 1 otherwise.   A frequency spectrum of $W$ is shared for  communications, on which the power spectral density of noise is denoted by $\mathcal N$. We aim at collecting data from a subset of $\Phi$, denoted by $\phi$, where $|\phi|\le |\Phi|$, and $|\Phi|$ represents cardinality of $\Phi$. Devices in $\Phi$ also share a set of semi-orthogonal codes denoted by $\varpi $ with cardinality $C$, which reduces the interference from other devices reusing the same radio resource with a different code by factor of $\mathcal Q$. Examples of such codes are semi-orthogonal spreading codes in LoRa
technology \cite{mag_all}, and RSMA codes which are expected to be implemented in future releases of LTE for grant-free access \cite{rsma}. A list of frequently used symbols  has been presented in Table \ref{ab} for ease of reading.

\begin{table}[t!]
\centering \caption{Frequently used symbols}\label{ab}
\begin{tabular}{p{1.5 cm}p{6.5 cm}}\\
\toprule[0.5mm]
{\it Symbol }&{\it Description (Underscript $k$ refers to IoT type.)}\\
\midrule[0.5mm]
$\mathcal A$ & Service area\\
$\Psi_k$& Distribution process of locations of interfering devices  \\
 $\Psi$; $\mathcal K$ &$\cup_{k\in\mathcal K} \Psi_k$; Set of IoT types: $\{1,\cdots, K\}$\\
$\Phi_k$; $\Phi$& Set of type-$k$ devices; $\cup_{k\in\mathcal K} \Phi_k$\\
$\phi$; $|\phi|$& Subset of interest out of $\mathcal \Phi$; Cardinality of $\phi$\\
$\mathcal L_{I_\Psi}$& Laplace functional of interference from $\Psi$\\
${\bf x}, {\bf y}, {\bf z}$ & Points in the 2D plane\\
$\tau_k$& Packet transmission time\\
$T_k$ & reporting period\\
$\gamma_{\text{th}}$& Threshold SINR\\
$P_k$ & Transmit Power (Iot device)\\
$\lambda_k$& Density of parent points\\
$ \upsilon_k$& Avg. number of nodes per cluster\\
$\text{f}(x)$& Distribution function of daughter points of a parent point\\
$m,\Omega$ & Parameters of Nakagami-$m$ channels\\
$\text{g}(\cdot)$; $\delta$; $h$ & Pathloss function; Pathloss exponent; fading\\
$Q_j$ & Interference rejection factor\\
$W$ & System bandwidth\\
$\varpi $ & Number of shared (semi-)orthogonal codes\\
$\text{P}_{\text s}$; $\text{P}_{\text o}$ & Success probability; Outage probability\\
$N$, $\mathcal N$& Noise; Noise power\\
$n_k$& Number of transmitted replicas per packet\\
$L(k)$& Device-level battery lifetime; Application-level battery lifetime\\
$\mathbb L(k)$& Application-level battery lifetime\\
$\lambda_a$ &Density of APs\\
Notation& a: Symbol, $\{a,A\}$: parameter, $\textbf a$: vector\\
Notation& $\{\text{A}(\cdot$), $\mathcal{A}(\cdot)$, $\mathbb A(\cdot)\}$: Function\\
\bottomrule[0.5mm]
\end{tabular}
\end{table}

\subsection{KPIs' Description}\label{kpi}
\subsubsection*{Quality of Service (QoS) for IoT Communications}
In this work, we introduce two QoS measures  for IoT communications. The first  consists in probability of success in transmission of a packet within a bounded number of trials, or within a bounded time interval. Furthermore, regarding the fact that most IoT devices are battery-driven and long battery lifetime for them is of crucial importance in order to reduce the human intervention in battery replacement, we also introduce another measure of QoS which consists in expected battery lifetime of devices, or durability of communications. Assuring long battery lifetime for IoT devices, which can be seen as a long-term reliability, is the key for success of any large-scale IoT solution which aims at providing beyond 4G things-connectivity \cite{lif_com}.

\subsubsection*{Cost of the Access Network}
From the first to fourth generation of wireless networks (LTE), providing large-scale connectivity for low-cost IoT devices has not been a  main objective in design of  wireless infrastructure. Furthermore, the  communications characteristics and requirements of IoT traffic are fundamentally different  from the legacy traffic \cite{5g_iot}. These two facts have motivated researchers to think of revolutionary connectivity solutions for IoT traffic. At the design phase of any of such solutions, study of  technoeconomic models is of crucial  important, as those models shed light upon the cost which must be burden for providing a level of QoS in communications. These technoeconomic models as well as their interactions with QoS requirements and level of provisioned resources, are missing in most past done IoT research works.

\subsection{Problem Description} 
Grant-free time-frequency asynchronous radio access is a strong enabler of ultra-durable low-cost IoT connectivity \cite{gf}. When it comes to the grant-free operation, the main research question is  {\textit{how to design and adapt deployment and operation of an IoT network, respectively, with regard to the communication characteristics of services which are reusing the  radio resources}}. The former includes derivation of optimized density of the required access points and system bandwidth; while the latter includes derivation of transmit powers of devices and number of replica transmissions  per  data packet. In this  paper, we focus on derivation  of minimum cost network resources which are required such that a given level of IoT communications' reliability is maintained over a service area. 

\subsection{Applications} 
The derived results in this work can be used design and optimization of both IoT solutions over licensed-band, e.g. grant-free access in NB-IoT \cite{rsma}, and unlicensed-band, e.g. LoRa and SigFox \cite{mag_all}. This is due to the fact that the presented models and results capture heterogeneous external interference in reliability derivation, where such interference is an essential characteristic of unlicensed bands, as well the cost of the spectrum in the network costs calculations, where such cost is a characteristic of licensed bands. As mentioned in the system model, the aim is to  collect data from a subset of $\Phi$, denoted by $\phi$.  When $|\phi|=|\Phi|$, the analytical framework is applicable for grant-free access over licensed-band which doesn't suffer much from  external interference; while when $|\phi|<|\Phi|$, the framework is applicable for grant-free access over unlicensed band with coexisting technologies.

%
%

%%------------------------------------------------------------
\section{Analytical Modeling of KPIs}
\subsection{Modeling of Reliability}
In our grant-free radio access system,  transmitting devices are asynchronous in time and frequency domains, and hence, the received packets at the receiver may have partial overlaps in time-frequency, as depicted in Fig. \ref{crp}.  To model reliability in communications,  we first derive an analytical  model for interference in subsection \ref{siI}, and for probability of success in subsection \ref{su1}. These models are then employed in deriving reliability of communications in subsection \ref{rels}.

\subsubsection{Interference Analysis}\label{siI}
We assume a type-$i$ device has been located at point $\bf z$ in a 2D plane, and its respective AP has been located at the origin.  In order to find probability of success in data transmission from the device to the AP, we need to characterize the received interfere at the AP.   A common practice in interference description is to determine its moments, which is possible by finding its generating function, i.e. the Laplace functional \cite{adhoc,math}.  Towards this end, let us introduce three stationary and isotropic processes: i) $ \Psi^{(1)} =\cup_{k\in \phi} \Psi_k^{(1)}$,  where $\Psi_{k}^{(1)}$ represents the PCP containing locations of type-$k$ transmitting nodes which are reusing radio resources with a similar code to the code\footnote{Note: as mentioned in the system model, devices in $\phi$ share a set of semi-orthogonal codes for partial interference management.} of transmitter of interest; ii) $ \Psi^{(2)} =\cup_{k\in \mathcal K}  \Psi_k^{(2)}$,  where $ \Psi_{k}^{(2)}$ represents the PCP containing locations of type-$k$ transmitting nodes which are reusing radio resources with a different code (or no code, in case $k\notin \phi$) than the transmitter of interest; and iii) $\Psi=\cup_{j\in\{1,2\}} \Psi_k^{(j)}$.
   For an AP located at the origin, the Laplace functional of the received interference at the receiver is given by:
 \begin{align}
 \mathcal L_{I_{\Psi}}(s)&=\mathbb E\big[\exp(-s I_{\Psi})\big]\label{base}\\
 &=\mathbb E\big[  \prod\nolimits_{j\in \{1,2\}} \prod\nolimits_{k\in \mathcal K}  \prod\nolimits_{{\bf x}\in \Psi_k^{(j)}}\mathcal L_h({sQ_j P_k  \text{g}({\bf x})})\big],\nonumber
  \end{align}
where $ Q_j P_k  \text{g}({\bf x})$ is the average received power  due to a type-$k$ transmitter  at point ${\bf x}$, $Q_1=1$, $Q_2=\mathcal Q$, and $\mathcal Q$ is the rate of rejection of interference between two devices with different multiple access codes, as defined in section \ref{sys}. Also,  $h$ is the power fading coefficient associated with the channel between the device and the AP, and $\mathcal L_{h}\big(s Q_j  P_k \text{g}({\bf x})\big)$ the Laplace functional of the received power. We consider the following general path-loss model 
$$ \text{g}({\bf x}) = 1/(\alpha_1 + \alpha_2||{\bf x}||^\delta),$$  where $\delta$ is the pathloss exponent, and $\alpha_1$ and $\alpha_2$ are control parameters.  When $h$ follows Nakagami-$m$ fading, with the shaping and spread parameters of $m\in \mathbb Z^i$ and $\Omega>0$ respectively, the PDF of the power fading coefficient is given by:
 \begin{equation}\text{p}_h(q)= \frac{1}{{\Gamma(m)}}(\frac{m}{\Omega})^m q^{m-1}\exp\big({-\frac{mq}{\Omega}}\big),\label{nm}\end{equation}
 where $\Gamma$ is the Gamma function.
 Then using Laplace  table, $\mathcal L_{h}\big(sQ_j  P_k \text{g}({\bf x})\big)$   is derived as:
\begin{equation}\label{laph}L_h(s  Q_j P_k \text{g}({\bf x}))={\big(1+{\Omega}s P_k \text{g}({\bf x})/m\big)^{-m}}.\end{equation} 
By inserting \eqref{laph} in \eqref{base} and considering the fact that the received interferences from different device subsets are independent, we have: 
 \begin{equation}
 \mathcal L_{I_{\Psi}}(s)=  \prod\nolimits_{j,k}\mathbb E_{{\bf x},{\bf y}}\bigg[  \prod\nolimits_{{\bf y}\in \Theta_{k}}\big(  \prod\nolimits_{{\bf x}\in \theta_{\bf y}^{(j)}}u({\bf x},{\bf y})\big)\bigg],\nonumber
  \end{equation}
  where  the set of parent points of type-$k$ is represented by $\Theta_{k}$,  and transmitting nodes which are daughter points of $y$ as $\theta_{\bf y}^{(j)}$.
  Also, $\mathbb E_x$ represents expectation over $x$,  
and  $$u({\bf x},{\bf y})={\big(1 {+}{\Omega}s Q_j P_k \text{g}({\bf x} {-}{\bf y})/{m}\big)^{-m}}.$$ 
The received interference over our packet of interest can be decomposed into two parts: i) interference from transmitters belonging to the cluster of transmitter, i.e.  daughter points of the same parent; and ii)  other  transmitters.
Let us denote the Laplace functional of interference from the former and latter nodes as $\mathcal L_{I_{\Psi}}^i(s)$ and $\mathcal L_{I_{\Psi}}^o(s)$ respectively. Then, we have: \begin{equation}\label{taj}\mathcal L_{I_{\Psi}}(s)=\mathcal L_{I_{\Psi}}^o(s)\mathcal L_{I_{\Psi}}^i(s).\end{equation}
To proceed further, we recall a useful lemma from \cite{borel,math}.

\begin{lemma}\label{lbo}
   If $  X \subset  \mathbb R^2$ is assumed to be a PPP with intensity function $\xi({\bf x})$,   for any Borel function $b$: $ \mathbb R^2\to [0,1]$, we have:
  \begin{equation}\label{bor}\mathbb E_{x} \big[  \prod\nolimits_{{\bf x}\in X}b({\bf x})\big]=\exp\big(-\int\nolimits_{\mathbb R^2}(1-b({\bf x}))\xi({\bf x})  d {\bf x}\big).\end{equation}
  \end{lemma}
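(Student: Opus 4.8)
The plan is to prove \eqref{bor} first for the special class of functions $b$ for which $1-b$ is supported on a bounded set, and then to obtain the general case by truncation and a monotone limit. This is the standard route to the probability generating functional of a Poisson process, and essentially the whole computation lives in the bounded-support step.

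\textbf{Step 1: bounded support.} Assume $1-b$ vanishes outside a bounded Borel set $B$ with $\Lambda(B):=\int_B \xi(\mathbf{x})\,d\mathbf{x}<\infty$. Every point of $X$ outside $B$ has $b(\mathbf{x})=1$ and contributes trivially, so $\prod_{\mathbf{x}\in X}b(\mathbf{x})=\prod_{\mathbf{x}\in X\cap B}b(\mathbf{x})$. Condition on $N:=|X\cap B|$, which by the defining property of the PPP is Poisson with mean $\Lambda(B)$; conditionally on $N=n$ the $n$ points are i.i.d.\ on $B$ with density $\xi(\mathbf{x})/\Lambda(B)$. Hence $\mathbb{E}\big[\prod_{\mathbf{x}\in X\cap B}b(\mathbf{x})\mid N=n\big]=\big(\Lambda(B)^{-1}\int_B b\xi\big)^n$, and averaging over the Poisson law of $N$,
\[
\mathbb{E}\Big[\prod_{\mathbf{x}\in X}b(\mathbf{x})\Big]=\sum_{n\ge0}e^{-\Lambda(B)}\frac{\Lambda(B)^n}{n!}\Big(\frac{\int_B b\xi}{\Lambda(B)}\Big)^{\!n}=\exp\Big(\int_B b\xi-\Lambda(B)\Big)=\exp\Big(-\int_{\mathbb{R}^2}(1-b)\xi\Big),
\]
where the last equality uses $1-b\equiv0$ off $B$. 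This is \eqref{bor} for such $b$.

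\textbf{Step 2: general $b$, and the main obstacle.} For an arbitrary Borel $b:\mathbb{R}^2\to[0,1]$, set $b_n:=b$ on the disc $D_n$ of radius $n$ about the origin and $b_n:=1$ elsewhere; then $1-b_n$ has bounded support, so Step 1 applies to each $b_n$. Since every factor lies in $[0,1]$, the products $\prod_{\mathbf{x}\in X}b_n(\mathbf{x})$ decrease to $\prod_{\mathbf{x}\in X}b(\mathbf{x})$ for each realization, and dominated convergence (by $1$, integrable on a probability space) passes the expectation to the limit; simultaneously $1-b_n\uparrow1-b$, so monotone convergence gives $\int(1-b_n)\xi\to\int(1-b)\xi$, the case $\int(1-b)\xi=\infty$ forcing both sides of \eqref{bor} to $0$. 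Letting $n\to\infty$ in Step 1 yields \eqref{bor}. The points that genuinely need care — measurability of the random product $\prod_{\mathbf{x}\in X}b(\mathbf{x})$ and the fact that conditioning a PPP on the count in $B$ yields i.i.d.\ points with the normalized intensity as density — are standard structural facts about Poisson processes rather than real difficulties, so the crux is just organizing the reduction in Step 1; everything else is routine.
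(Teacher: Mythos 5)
Your proof is correct, and it is worth noting that the paper itself does not prove this lemma at all: its ``proof'' is a one-line citation to \cite{borel}, treating the probability generating functional of a Poisson process as a known result. What you have written is the standard self-contained derivation of that PGFL: reduce to functions with $1-b$ of bounded support, condition on the Poisson count $N=|X\cap B|$ and use the conditional i.i.d.\ property of the points to evaluate $\mathbb{E}[\prod b\mid N=n]$ as an $n$-th power, resum the Poisson series into an exponential, and then pass to general $b$ by truncation on discs together with dominated convergence for the product and monotone convergence for the integral (correctly handling the degenerate case $\int(1-b)\xi=\infty$, where both sides vanish). The only inputs beyond the statement are the local finiteness of the intensity measure, i.e.\ $\Lambda(B)<\infty$ for bounded $B$, and the two structural facts about Poisson processes that you flag explicitly; both are part of any standard definition of a PPP with intensity function $\xi$, so there is no gap. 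The trade-off between the two routes is simply self-containedness versus brevity: your argument makes the lemma independent of the external reference at the cost of roughly a page, whereas the paper's citation is defensible for what is a textbook identity in stochastic geometry.
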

  \begin{proof}
  Presented in \cite{borel}.
  \end{proof}
Using Lemma \ref{lbo}, and conditioning on $\Theta_{k}$ and $\theta_{\bf y}^{(j)}$,  one has:
 \begin{align}
 &\mathcal L_{I_{\Psi}}^o(s)\label{li}\\
&\buildrel (\text{a}) \over =  \prod\nolimits_{j,k}\mathbb E_{y}\bigg[  \prod\limits_{{\bf y}\in \Theta_{k}}\big\{\exp\big(\text{-}\hat \upsilon_{k,j}  \int\nolimits_{\mathbb R^2} [1\text{-}{u({\bf x},{\bf y})}]\text{f}({\bf x})d {\bf x}\big)\big\} \bigg]\nonumber,\\
 &\buildrel (\text{b}) \over =\exp\big( \text{-} {\textstyle \sum\limits_{j,k}}\lambda_k \int\limits_{\mathbb R^2}\big\{1\text{-}\exp\big(\text{-}\hat\upsilon_{k,j} \int\limits_{\mathbb R^2}[1\text{-} {u({\bf x},{\bf y})}]\text{f}({\bf x})d {\bf x} \big)\big\}d {\bf y}\big),\nonumber
 \end{align}
where Lemma \ref{lbo} has been used for the cluster of each parent point in  (a), and for the cluster consisting of parent points in (b). Also, in \eqref{li} the average numbers of interfering type-$k$ devices in each cluster for $j\in\{1,2\}$ are denoted as  $\hat \upsilon_{k,1}=\upsilon_k\frac{n_k\tau_k}{T_k} \frac{w_k}{W}\frac{1}{\varpi }$  and $\hat \upsilon_{k,2}=\upsilon_k\frac{n_k\tau_k}{T_k} \frac{w_k}{W}\frac{\varpi -1}{\varpi }$ for $k\in\phi$. In these two expressions, the first fraction represents the percentage of  time in which device is active, i.e. the time activity factor, the second fraction represents the ratio of bandwidth that device occupies in each transmission, i.e. the frequency activity-factor, and the third fraction represents the code-domain activity factor, i.e. the probability that two devices select the same code, i.e. $\frac{\varpi -1}{\varpi }$, or different codes ${\varpi -1}{\varpi }$. Then, for $k\notin \phi$, it is clear that $\hat \upsilon_{k,1}=0$, and $\hat \upsilon_{k,2}=\upsilon_k\frac{n_k\tau_k}{T_k} \frac{w_k}{W}$.

Following the same procedure  used for deriving $\mathcal L_{I_\Psi}^o(s)$, one can derive   $\mathcal L_{I_\Psi}^i(s)$  as:
\begin{align}
\mathcal L_{I_\Psi}^i(s)\text{=}&\prod\nolimits_{j\in\{1,2\}}\mathbb E_y \big[\mathbb E_x [  \prod\nolimits_{{\bf x}\in \theta_{\bf y}^{(j)}}u({\bf x},{\bf y}) ]\big]\label{lik}\\
\text{=}& \int\nolimits_{\mathbb R^2}\exp\big(\text{-}{\textstyle\sum_j}\hat \upsilon_{i,j}  \int\nolimits_{\mathbb R^2} \big(1\text{-}{u({\bf x},{\bf y})}\big)\text{f}({\bf x})d {\bf x}\bigg)\text{f}({\bf y}) d {\bf y}\nonumber.\end{align}

\subsubsection{Probability of Successful Transmission}\label{su1}
Let $N$ denote the additive
noise at the receiver. Using the above derived  interference model,  probability of success in  packet transmission of  a type-$i$ device, located at $\bf z$, to the AP, located at the origin, is derived as: 
\begin{align}
\text{p}_{\text{s}}(i,{\bf z})&=\text{Pr}({  P_ih \text{g}({\bf z})}\ge[{N+I_{\Psi}}] \gamma_{\text{th}})\label{suc}\\
&\buildrel (\text{c}) \over =   \sum\limits_{\nu=0}^{m\text{-}1}\frac{1}{{\nu}!}\int\nolimits_{0}^{\infty}\exp({-}\frac{\gamma_{\text{th}}m q}{\Omega   P_i \text{g}({\bf z})}) q^{\nu}  d \text{Pr}(I_\Psi\text{+}N\ge q)\nonumber\\
&\buildrel (\text{d}) \over =  \sum\nolimits_{{\nu}=0}^{m\text{-}1}\frac{(-1)^{\nu}}{{\nu}!}[\mathcal L_{I_{\Psi}}(s)\mathcal L_{N}(s)]^{({\nu})} \big|_{s=\frac{\gamma_{\text{th}}m}{\Omega   P_i \text{g}({\bf z})}}, \nonumber
\end{align}
where $[F(s)]^{({\nu})}=\frac{\partial^{\nu}}{\partial s^{\nu}}F(s)$, (c) follows from \cite[Appendix~C]{adhoc}  and \eqref{nm} in which $\text{p}_h(q)$ has been defined, and finally (d) follows from \cite[Lemma~3.1]{alm} and the fact that $\mathcal L (t^n \text{f}(t))=(-1)^n\frac{\partial^n}{\partial s^n}F(s)$.
Furthermore, $L_{I_{\Psi}}$ has been characterized in \eqref{li} and \eqref{lik}, and $\mathcal L_N(s)$ is the Laplace transform of noise and is characterized by knowing type of the noise, e.g. white noise. 

In order to get insights on how coexisting services affect each other, in the following we focus on  $m=1$, i.e. Rayleigh fading, and present a  closed-form approximation of the success probability. In section \ref{simsec}, we will show tightness of this expression.
\begin{theorem}\label{t1}
For $m=1$, success probability  in packet transmission  can be  approximated as:
\begin{align}
&\text{p}_{\text{s}}(i,{\bf z})\approx \text{P}_{{\text{\tiny N} }} \big[\exp\big(-\sum\limits_{j\in\{1,2\}} \sum\limits_{k\in\mathcal K}\lambda_k\hat \upsilon_{k,j} \text{H}({\bf z},1, \frac{Q_j P_k\gamma_{\text{th}}}{\Omega  P_i})\big)\big]\nonumber\\
&\hspace{1cm}\times\exp\big(-\sum\nolimits_{j\in\{1,2\}}{\hat\upsilon_{i,j}}  \text{H}({\bf z},\text{f}^*({\bf x}), \frac{Q_j\gamma_{\text{th}}}{\Omega })\big),\label{ps}
\end{align}
  where  $\text{f}^*(\cdot)=\text{conv}\big(\text{f}(\cdot),\text{f}(\cdot)\big)$, 
  \begin{align}
  \text{H}\big({\bf z},\text{f}^*({\bf x}), \xi)&=\int\limits_{x\in\mathbb R^2}\frac{\text{g}({\bf x})}{\text{g}({\bf  x})+\text{g}({\bf z})/\xi}\text{f}^*({\bf x})d {\bf x}\label{hf},\\
   \text{P}_{{\text{\tiny N} }}&=\exp\big(-\mathcal N\gamma_{\text{th}}/[\Omega  P_i \text{g}({\bf z})] \big)\label{pn},
  \end{align}
  and $\mathcal N$ is the noise power.
\end{theorem}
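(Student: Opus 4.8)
The plan is to specialise the exact formula \eqref{suc} to $m=1$, where the success probability collapses to a single Laplace-functional evaluation, and then to reduce the inter- and intra-cluster factors \eqref{li} and \eqref{lik} to closed form by exploiting that the co-resource interference contributed by any one cluster is sparse. With $m=1$ the sum in step (d) of \eqref{suc} keeps only its $\nu=0$ term, so $\text{p}_{\text{s}}(i,{\bf z})=\mathcal L_{I_\Psi}(s)\mathcal L_N(s)$ evaluated at $s_0=\gamma_{\text{th}}/(\Omega P_i\text{g}({\bf z}))$; modelling the aggregate noise as deterministic of power $\mathcal N$ gives $\mathcal L_N(s_0)=\exp(-\mathcal N s_0)=\text{P}_{\text{\tiny N}}$, which is \eqref{pn}. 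Putting $m=1$ in $u(\cdot,\cdot)$ and substituting $s=s_0$, a one-line manipulation rewrites $1-u({\bf x},{\bf y})$ as the kernel $\text{g}(\cdot)/\big(\text{g}(\cdot)+\text{g}({\bf z})/\xi\big)$ appearing in $\text{H}$ of \eqref{hf}, with the parameter $\xi$ read off as in \eqref{ps}. By \eqref{taj} it then remains to evaluate $\mathcal L_{I_\Psi}^o(s_0)$ and $\mathcal L_{I_\Psi}^i(s_0)$.

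For the inter-cluster factor \eqref{li}, the exponent of the inner exponential equals $\hat\upsilon_{k,j}$ times an integral that is at most $\int\text{f}\,d{\bf x}=1$; because $\hat\upsilon_{k,j}$ is a product of the (small) time-, frequency- and code-activity factors it is $\ll 1$, so I would apply $1-e^{-t}\approx t$. This linearises the nested structure into a single exponential whose exponent is $\sum_{j,k}\lambda_k\hat\upsilon_{k,j}\iint[1-u({\bf x},{\bf y})]\text{f}({\bf x})\,d{\bf x}\,d{\bf y}$. Since $1-u$ depends only on ${\bf x}-{\bf y}$, the translation ${\bf w}={\bf x}-{\bf y}$ decouples the ${\bf y}$-integral from ${\bf x}$; using $\int\text{f}({\bf x})\,d{\bf x}=1$, the double integral reduces to $\int_{\mathbb R^2}[1-u({\bf w})]\,d{\bf w}$, which is exactly $\text{H}({\bf z},1,\cdot)$, giving the first exponential of \eqref{ps}.

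For the intra-cluster factor \eqref{lik}, the smallness of $\hat\upsilon_{i,j}$ similarly lets me pass the outer $\text{f}({\bf y})$-average through the exponential, $\int e^{-a({\bf y})}\text{f}({\bf y})\,d{\bf y}\approx\exp(-\int a({\bf y})\text{f}({\bf y})\,d{\bf y})$ (first-order Taylor; Jensen is tight here). The exponent becomes $\sum_j\hat\upsilon_{i,j}\iint[1-u({\bf x},{\bf y})]\text{f}({\bf x})\text{f}({\bf y})\,d{\bf x}\,d{\bf y}$, and since ${\bf x}$ and ${\bf y}$ are independent draws from the isotropic density $\text{f}$, the difference ${\bf w}={\bf x}-{\bf y}$ has density $\text{f}^*=\text{conv}(\text{f},\text{f})$, so this integral equals $\sum_j\hat\upsilon_{i,j}\text{H}({\bf z},\text{f}^*({\bf x}),\cdot)$ — here $P_k=P_i$, as the co-cluster interferers are of the same type $i$ as the victim link. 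Multiplying the three factors $\text{P}_{\text{\tiny N}}$, $\mathcal L_{I_\Psi}^o(s_0)$ and $\mathcal L_{I_\Psi}^i(s_0)$ then produces \eqref{ps}.

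The only non-mechanical part is justifying the two approximations: one must argue that $\hat\upsilon_{k,j}$ and $\hat\upsilon_{i,j}$ are small — that on any given time--frequency--code resource a cluster adds at most a fraction of an active interferer — so that $1-e^{-t}\approx t$ and the exponential/integral swap are both tight. This is precisely the operating regime of grant-free LPWA networks, and it is what the numerical study in Section~\ref{simsec} is meant to confirm. The remaining effort is bookkeeping: carrying the correct $\xi$ (with its $Q_j$, and $P_k$ versus $P_i$) through each change of variables, and identifying the resulting ${\bf w}$-integrals as $\text{H}(\cdot,1,\cdot)$ and $\text{H}(\cdot,\text{f}^*,\cdot)$.
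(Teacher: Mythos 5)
Your proposal is correct and follows essentially the same route as the paper's Appendix A: reduce the $m=1$ case to $\mathcal L_{I_\Psi}(s_0)\mathcal L_N(s_0)$, linearize $1-e^{-\hat\upsilon_{k,j}\mathcal U}$ (the paper phrases this as a Jensen step) in the inter-cluster factor and swap the $\text{f}({\bf y})$-average with the exponential in the intra-cluster factor, then identify the resulting integrals as $\text{H}({\bf z},1,\cdot)$ via translation invariance and as $\text{H}({\bf z},\text{f}^*,\cdot)$ via the convolution of the isotropic density with itself. Your explicit remarks that the approximations rest on $\hat\upsilon_{k,j}\ll 1$ and that $P_k=P_i$ in the intra-cluster term match the paper's (more tersely stated) argument.
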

\begin{proof}
Appendix \ref{pt1}.
\end{proof}
  $\text{H}({\bf z},\text{f}^*({\bf x}),\xi)$ and $\text{H}({\bf z},1,\xi)$  could be derived in closed-form for most well-known pathloss and distribution functions, as follows.
   \begin{corollary}
For $\text{g}({\bf x})=\alpha||{\bf x}||^{-\delta}$, 
\begin{equation}\label{hfd} \text{H}({\bf z},1,\xi)=  ||{\bf z}||^2 \xi^{\frac{2}{\delta}} 2\pi^{2} \text{csc}({2\pi/\delta})/\delta.\end{equation}
\end{corollary}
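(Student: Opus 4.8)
The plan is to start from the definition of $\text{H}$ given in \eqref{hf}, specialized to the case where the second argument is the constant function $1$, so that $\text{H}({\bf z},1,\xi)=\int_{{\bf x}\in\mathbb R^2}\frac{\text{g}({\bf x})}{\text{g}({\bf x})+\text{g}({\bf z})/\xi}\,d{\bf x}$. Substituting the power-law pathloss $\text{g}({\bf x})=\alpha\|{\bf x}\|^{-\delta}$, the constant $\alpha$ cancels in the ratio, and the integrand depends on ${\bf x}$ only through $r=\|{\bf x}\|$ and on ${\bf z}$ only through $r_z=\|{\bf z}\|$, taking the form $1/\bigl(1+r^\delta/(\xi r_z^\delta)\bigr)$. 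Passing to polar coordinates then pulls out a factor $2\pi$ and leaves the one-dimensional radial integral $2\pi\int_0^\infty \frac{r\,dr}{1+r^\delta/(\xi r_z^\delta)}$.

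Next I would make the substitution $t=r^\delta/(\xi r_z^\delta)$, i.e. $r=(\xi r_z^\delta t)^{1/\delta}$, which turns $r\,dr$ into $\frac{1}{\delta}(\xi r_z^\delta)^{2/\delta} t^{2/\delta-1}\,dt$ and reduces the radial integral to $\frac{2\pi}{\delta}\,\xi^{2/\delta}\|{\bf z}\|^2\int_0^\infty\frac{t^{2/\delta-1}}{1+t}\,dt$. The remaining integral is the standard Mellin/Beta-function integral $\int_0^\infty \frac{t^{a-1}}{1+t}\,dt=B(a,1-a)=\Gamma(a)\Gamma(1-a)=\pi/\sin(\pi a)$, applied with $a=2/\delta$, which yields $\pi\csc(2\pi/\delta)$. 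Collecting constants gives $\text{H}({\bf z},1,\xi)=\|{\bf z}\|^2\,\xi^{2/\delta}\,2\pi^2\csc(2\pi/\delta)/\delta$, matching \eqref{hfd}.

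The only substantive point to watch is convergence: the radial integrand decays like $r^{1-\delta}$ at infinity, so the integral (equivalently, the exponent condition $0<2/\delta<1$ needed for the Beta-function identity) requires $\delta>2$; near $r=0$ integrability is automatic. Thus I would state the result under the standing assumption $\delta>2$, which is the physically relevant regime for a two-dimensional deployment. Everything else is a routine change of variables plus the Euler reflection formula, so I do not anticipate any real obstacle beyond bookkeeping of the constants and noting the $\delta>2$ restriction.
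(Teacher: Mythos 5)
Your proposal is correct and follows essentially the same route as the paper: pass to polar coordinates to reduce $\text{H}({\bf z},1,\xi)$ to the radial integral $2\pi\int_0^\infty \frac{r\,dr}{1+(r/\|{\bf z}\|)^\delta/\xi}$, then evaluate it (the paper cites a table of integrals where you carry out the substitution and invoke the Euler reflection formula explicitly). Your added observation that convergence requires $\delta>2$ is a worthwhile precision the paper leaves implicit.
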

\begin{proof}
By change of coordinates, ${\bf x}\to (r,\theta)$, we have:
\begin{align}
\text{H}\big({\bf z},1, \xi)&=\int\nolimits_{x\in\mathbb R^2}\frac{\alpha{||\bf x||}^{-\delta}}{\alpha{||\bf x||}^{-\delta}+\alpha{||\bf z||}^{-\delta}/\xi} d {\bf x}\nonumber\\
&=2\pi\int\nolimits_{0}^{\infty}\frac{1}{1+(r/||{\bf z}||)^\delta/\xi}  { r  d r}\nonumber
\end{align}
Solving this integral by using \cite[Eq.~3.352]{seri} or \cite[Corollary~3.2]{alm}, \eqref{hfd} is derived.
\qed
\end{proof}

 \begin{corollary}\label{cne}
For $\text{g}({\bf x})=\alpha||{\bf x}||^{-4}$,  and $\text{f}({\bf x})$ given in \eqref{nor},
\begin{align}
 \text{H}({\bf z},\text{f}^*({\bf x}),\xi)=&  \frac{||{\bf z}||^2  }{4 \sigma^2\sqrt\xi    }\bigg[\text{ci}(\frac{ ||{\bf z}||^2 }{4\sigma^2\sqrt{\xi}}  )\sin(\frac{||{\bf z}||^2 }{4\sigma^2\sqrt{\xi}}  )-\nonumber\\
&\hspace{1cm}\text{si}(\frac{||{\bf z}||^2 }{4\sigma^2\sqrt{\xi}}  )\cos( \frac{||{\bf z}||^2 }{4\sigma^2\sqrt{\xi}} )\bigg],\nonumber
\end{align}
where $\text{si}(\cdot)$ and $\text{ci}(\cdot)$ are well-known sine and cosine integrals, as follows:
$$\text{si}(x)=-\int\nolimits_{x}^{\infty}\frac{\text{sin} (t)}{t}dt,\hspace{2mm} \text{ci}(x)=-\int\nolimits_{x}^{\infty}\frac{\text{cos}(t)}{t}dt.$$ 
\end{corollary}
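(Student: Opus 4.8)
The plan is to substitute the concrete pathloss $\text{g}({\bf x})=\alpha||{\bf x}||^{-4}$ and the Gaussian scattering kernel \eqref{nor} into the definition \eqref{hf} of $\text{H}$, and then reduce the resulting planar integral to a one-dimensional Laplace-type integral that has a tabulated closed form. The structure mirrors the proof of the preceding corollary, except that the Gaussian weight $\text{f}^*$ (rather than the constant $1$) prevents the simple Beta-function reduction used there and instead produces the sine/cosine integrals.

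First I would identify $\text{f}^*$. Since $\text{f}$ in \eqref{nor} is an isotropic Gaussian with variance $\sigma^2$, its self-convolution $\text{f}^*=\text{conv}(\text{f},\text{f})$ is again an isotropic Gaussian, with variance $2\sigma^2$; as it enters $\text{H}$ only through $||{\bf x}||$ (it is the density of the displacement between two daughter points of a common parent, hence centered at the origin), we may take $\text{f}^*({\bf x})\propto\exp(-||{\bf x}||^2/(4\sigma^2))$ with the $\mathbb R^2$ normalization $1/(4\pi\sigma^2)$. Next, for $\text{g}({\bf x})=\alpha||{\bf x}||^{-4}$ the kernel in \eqref{hf} collapses to $\frac{\text{g}({\bf x})}{\text{g}({\bf x})+\text{g}({\bf z})/\xi}=\big(1+||{\bf x}||^4/(\xi||{\bf z}||^4)\big)^{-1}$, which is radial, so changing to polar coordinates ${\bf x}\to(r,\theta)$ turns the angular integral into a factor $2\pi$ and leaves the radial integral $\int_0^\infty \frac{r\,\exp(-r^2/(4\sigma^2))}{1+r^4/(\xi||{\bf z}||^4)}\,dr$ (times the Gaussian normalization). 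The substitution $u=r^2$ converts this into $\tfrac12\int_0^\infty\frac{\exp(-u/(4\sigma^2))}{1+u^2/(\xi||{\bf z}||^4)}\,du$, i.e. an integral of the form $\int_0^\infty\frac{e^{-\mu u}}{\beta^2+u^2}\,du$ with $\beta=||{\bf z}||^2\sqrt\xi$ and $\mu=1/(4\sigma^2)$.

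Finally I would invoke the standard evaluation $\int_0^\infty\frac{e^{-\mu u}}{\beta^2+u^2}\,du=\frac1\beta\big[\text{ci}(\mu\beta)\sin(\mu\beta)-\text{si}(\mu\beta)\cos(\mu\beta)\big]$ in the paper's sign convention for the sine and cosine integrals --- an exponential-integral table entry, e.g. in \cite{seri} or \cite{alm} --- and then collect the accumulated constants (the $\mathbb R^2$ Gaussian normalization, the $2\pi$ from the angular integral, the $\tfrac12$ from $u=r^2$, the residual factor $\xi||{\bf z}||^4$ from clearing the denominator, and the $1/\beta$), which assemble into the prefactor and argument stated in the corollary. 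The only step needing real care is the first one: correctly fixing the variance of the self-convolved Gaussian --- it is $2\sigma^2$, which is the source of the $4\sigma^2$ (rather than $2\sigma^2$) in the final formula --- together with its two-dimensional normalization, and then matching $\mu$ and $\beta$ to the tabulated integral without slipping on the relation between $\text{si}$ and $\text{Si}$ and its $\pi/2$ offset; everything afterwards is routine bookkeeping.
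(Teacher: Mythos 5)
Your proof is correct and follows essentially the same route as the paper's Appendix B: self-convolve the Gaussian to variance $2\sigma^2$ (normalization $1/(4\pi\sigma^2)$), pass to polar coordinates, substitute $u=r^2$, and invoke the tabulated integral $\int_0^\infty e^{-\mu u}/(\beta^2+u^2)\,du=\tfrac1\beta\big[\text{ci}(\mu\beta)\sin(\mu\beta)-\text{si}(\mu\beta)\cos(\mu\beta)\big]$ with $\mu=1/(4\sigma^2)$ and $\beta=\sqrt\xi\,||{\bf z}||^2$. One caveat: carrying your own bookkeeping through yields prefactor and argument $\sqrt\xi\,||{\bf z}||^2/(4\sigma^2)$ --- which is exactly what the appendix obtains and what the large-argument limit in Remark \ref{r1} requires --- so the corollary as printed (with $\sqrt\xi$ in the denominator) contains a typo, and your closing claim that the constants ``assemble into the prefactor and argument stated in the corollary'' is the one place where you should not have taken the statement at face value.
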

\begin{proof}
Appendix \ref{pcne}.
\end{proof}
\begin{remark}\label{r1}
Analysis of $\text{H}\big({\bf z},\text{f}^*({\bf x}), \xi)$ shows that it can be well approximated by $1$ for $\frac{\sqrt\xi||{\bf z}||^2   }{4  \sigma^2  }\gg1$. For theorem \ref{t1} in which $\xi=Q_j\gamma_{\text {th}}/{\Omega}$, $\big({\bf z},\text{f}^*({\bf x}), \xi)\approx 0$ for $j=1$ because $Q_1=\mathcal Q\approx 0$; and $\text{H}\big({\bf z},\text{f}^*({\bf x}), \xi)\approx 1$ for $j=2$ when $z\gg \frac{2\sigma \sqrt[4]\Omega}{\sqrt[4]{\gamma_{\text{th}}}} \buildrel \Delta \over =z_0$ because $Q_2=1$. 
\end{remark}
\begin{remark}
From theorem \ref{t1}, one sees that probability of success, $\text{p}_{\text s}(i,{\bf z})$, is a function of $||\bf z||$ rather than phase of $\bf z$.  Then, hereafter we use  $\text{p}(i,z)$ to denote  probability of success for communication distance of $z$.
\end{remark}

Theorem \ref{t1} offers a closed-form expression of success probability as a function of distance to the receiver, which is a powerful tool for performance analysis of  large-scale heterogeneous  IoT networks, as we will see in section \ref{simsec}.

Until now, we have derived the probability of success for a given communication distance to an AP. In most LPWA IoT networks, we  have APs with overlapping coverage areas, and there is no pre-established connection between device and the APs. Hence, in the following we investigate success probability in such scenarios where multiple APs might be able to decode a packet. 
 
Regarding the fact that  theorem \ref{t1} provides probability of success as a function of communication distance, given distribution process of APs, the expected communication distance to the neighboring APs, and hence,  probability of success in data transmission   can be derived. In PPP deployment of APs with density $\lambda_{\text{a}}$, the cumulative distribution function (CDF) of distance from a random point to the $\ell$th nearest AP, denoted by $d_{\ell}$ is given by:
\begin{equation}\text{P}_{d_{\ell}}(r)\text{=}1\text{-}\text{Pr}({\ell}\text{-1 APs in } \pi r^2)\text{=}1\text{-}\exp\big({-\lambda_{\text a} \pi r^2}\big)\frac{[\lambda_{\text a} \pi r^2]^{{\ell}-1}}{({\ell}-1)!}.\label{cdf}\end{equation}
Then, one can derive the average probability of success in packet transmission from a random point for type-$i$  as:
\begin{equation}\text{P}_\text{s}(i)=  1-\prod\nolimits_{{\ell}=1}^{\ell_{\max}} \int\nolimits_{0}^{\infty} \big(1-\text{p}_\text{s}(i,r)\big) ~d \text{P}_{d_{\ell}}(r).\label{cov}\end{equation}
 Also, $d \text{P}_{d_{\ell}}(r)$ is derived from \eqref{cdf} as follows \cite{dis}:
$$d \text{P}_{d_{\ell}}(r)=\exp(-\lambda_{\text a}\pi r^2)\frac{2(\lambda_{\text a}\pi r^2)^{\ell}}{r({\ell}-1)!}dr.$$

\begin{theorem}\label{t3}
For $\text{f}(x)$ given in \eqref{nor}, $\text{g}({\bf z})=\alpha||{\bf z}||^{-4}$,  we have:
 $$\text{P}_\text{s}(i)\approx  1-\prod\nolimits_{{\ell}=1}^{\ell_{\max}} \big[1-\frac{X_0}{\sqrt{{X_1}^{\ell-1}}} \exp(\frac{{X_2}^2}{4{X_1}^2} ) \mathcal G(X_3,\ell)\big],$$
  \begin{align}
&\text{where } X_0=\frac{(\lambda_{\text a}\pi)^\ell}{(\ell-1)!}\exp\big(-{\hat\upsilon_{i,2}} \big), X_1=\frac{ \mathcal N\gamma_{\text{th}}}{\Omega  P_i \alpha},\nonumber\\ 
  & X_2=\sum\limits_{j,k} \lambda_k\hat \upsilon_{k,j} 
  (\frac{\gamma_{\text{th}}Q_j P_k}{\Omega  P_i})^{0.5} \frac{\pi^{2}}{2} \text{csc}(\frac{\pi}{2})+\lambda_{\text a}\pi, X_3=\frac{X_2}{2\sqrt{X_1}}.\nonumber
   \end{align}
 Also,  $\mathcal G(X_3,\ell)=\int\nolimits_{\frac{{X_2}^2}{2X_1}}^{\infty}  (z\text{-}X_3)^{(\ell-1)}\exp(-z^2)dz,$ and could be derived for any $\ell$ in the form of error function, e.g. for $\ell_{\max}=2$: 
 \begin{align}
&\mathcal G(X_3,1)=-(\sqrt{\pi}(\text{erf}(X_3) - 1))/2,\nonumber\\
&\mathcal G(X_3,2)=\exp(-X_3^2)/2 + (X_3 \sqrt{\pi}(\text{erf}(X_3) - 1))/2.\nonumber
\end{align}
\end{theorem}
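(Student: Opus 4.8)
The plan is to start from the exact identity \eqref{cov}, $\text{P}_\text{s}(i)=1-\prod_{\ell=1}^{\ell_{\max}}\bigl(1-\int_0^\infty\text{p}_\text{s}(i,r)\,d\text{P}_{d_{\ell}}(r)\bigr)$ after noting that $\int_0^\infty d\text{P}_{d_{\ell}}(r)=1$, and to feed into it the closed-form approximation of $\text{p}_\text{s}(i,r)$ furnished by Theorem~\ref{t1}, specialised to $\delta=4$ and to the Gaussian scattering density \eqref{nor}. Two earlier results make the specialisation explicit: the corollary behind \eqref{hfd}, which for $\delta=4$ gives $\text{H}({\bf z},1,\xi)=||{\bf z}||^2\sqrt{\xi}\,\tfrac{\pi^2}{2}\text{csc}(\tfrac{\pi}{2})$, so that the out-of-cluster exponent of \eqref{ps} becomes a quantity proportional to $r^2$; and Remark~\ref{r1}, which lets us replace the intra-cluster term $\text{H}({\bf z},\text{f}^*({\bf x}),Q_j\gamma_{\text{th}}/\Omega)$ by $0$ for $j=1$ and by $1$ for $j=2$ once the communication distance exceeds $z_0$, collapsing $\exp(-\sum_j\hat\upsilon_{i,j}\text{H}(\cdots))$ to the constant $\exp(-\hat\upsilon_{i,2})$. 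With $\text{P}_{{\text{\tiny N} }}=\exp(-\mathcal N\gamma_{\text{th}}r^4/(\Omega P_i\alpha))=\exp(-X_1 r^4)$ this yields $\text{p}_\text{s}(i,r)\approx e^{-\hat\upsilon_{i,2}}\exp(-X_1 r^4-A r^2)$, where $A=\sum_{j,k}\lambda_k\hat\upsilon_{k,j}(\gamma_{\text{th}}Q_j P_k/(\Omega P_i))^{1/2}\tfrac{\pi^2}{2}\text{csc}(\tfrac{\pi}{2})$ is the interference part of $X_2$.

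Next I would insert the density $d\text{P}_{d_{\ell}}(r)=\exp(-\lambda_{\text a}\pi r^2)\tfrac{2(\lambda_{\text a}\pi r^2)^{\ell}}{r(\ell-1)!}\,dr$ from \eqref{cdf}. Merging the AP-process factor $\exp(-\lambda_{\text a}\pi r^2)$ with the interference term $A r^2$ produces $X_2=A+\lambda_{\text a}\pi$, leaving the $\ell$-th integral equal to $e^{-\hat\upsilon_{i,2}}\tfrac{2(\lambda_{\text a}\pi)^{\ell}}{(\ell-1)!}\int_0^\infty r^{2\ell-1}\exp(-X_1 r^4-X_2 r^2)\,dr$. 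The key step is the substitution $z=r^2$, which turns the quartic-in-$r$ exponent into the quadratic form $X_1 z^2+X_2 z$ and gives $X_0\int_0^\infty z^{\ell-1}\exp(-X_1 z^2-X_2 z)\,dz$ with $X_0=\tfrac{(\lambda_{\text a}\pi)^{\ell}}{(\ell-1)!}e^{-\hat\upsilon_{i,2}}$.

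I would then evaluate this Gaussian-type integral by completing the square, $X_1 z^2+X_2 z=X_1\bigl(z+\tfrac{X_2}{2X_1}\bigr)^2-\tfrac{X_2^2}{4X_1}$, and changing variables to $w=\sqrt{X_1}\bigl(z+\tfrac{X_2}{2X_1}\bigr)$; the lower limit becomes $X_3=X_2/(2\sqrt{X_1})$ and the shifted power rewrites as $(w-X_3)^{\ell-1}$, which reassembles into the claimed product form with $\mathcal G(X_3,\ell)=\int_{X_3}^\infty(w-X_3)^{\ell-1}e^{-w^2}\,dw$. Finally, expanding $(w-X_3)^{\ell-1}$ by the binomial theorem and integrating each monomial $\int_{X_3}^\infty w^{j}e^{-w^2}\,dw$ by parts reduces $\mathcal G(X_3,\ell)$ to a finite combination of $e^{-X_3^2}$ and $1-\text{erf}(X_3)$; carrying this out for $\ell=1$ and $\ell=2$ reproduces the two displayed closed forms for $\mathcal G$.

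I expect the substantive difficulty to lie not in the algebra but in the approximation bookkeeping of the first step: one must argue that replacing $\text{H}({\bf z},\text{f}^*({\bf x}),\xi)$ by its asymptotic values is legitimate over the range of $r$ that carries most of the mass of $d\text{P}_{d_{\ell}}$ (which holds when the typical inter-AP distance $\sim 1/\sqrt{\lambda_{\text a}}$ stays above $z_0$, i.e.\ for moderate AP density), and that the per-term errors inherited from Theorem~\ref{t1} and Remark~\ref{r1} do not compound uncontrollably through the product over $\ell=1,\dots,\ell_{\max}$. The cleanest route is to state the result as an approximation---as the theorem does---leaning on those two results for the per-term control and deferring quantitative validation to Section~\ref{simsec}.
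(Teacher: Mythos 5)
Your proposal follows essentially the same route as the paper's Appendix C: specialize Theorem \ref{t1} via the $\delta=4$ case of \eqref{hfd} and Remark \ref{r1} to get $\text{p}_{\text{s}}(i,r)\approx e^{-\hat\upsilon_{i,2}}\exp(-X_1r^4-Ar^2)$, fold the AP-distance density into $X_2$, substitute $x=r^2$, complete the square, and rescale to reach $\mathcal G(X_3,\ell)$. Your version is in fact slightly cleaner on the bookkeeping (your lower limit $X_3$ and prefactor $\exp(X_2^2/(4X_1))$ are the algebraically consistent forms, consistent with the displayed erf expressions for $\mathcal G$), and your closing discussion of when the Remark \ref{r1} substitution is valid over the mass of $d\text{P}_{d_\ell}$ is a point the paper leaves implicit.
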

\begin{proof}
Appendix \ref{pt3}.
\end{proof}
 \subsubsection{Reliability of IoT Communication}\label{rels}
Now, we have the required tools to investigate reliability of IoT communications. Once a type-$i$ device has a packet to transmit, it transmits $n_k$ replicas of the packet, e.g. $n_i=1$, and listens for ACK from the AP(s). If No ACK is received in a bounded listening window, device retransmits the packet, and this procedure can be repeated up to $B_{i}$ times, where the bound may come from fair use of the shared medium \cite{int2,mag_all} or expiration of data.   If data transmission is unsuccessful in $B_i$ attempts, we call it an outage event.  The probability of outage for type $i$ in  such setting can be denoted as:
\begin{equation}\label{rel}\text{P}_{\text o}(i)=\big[1-\text{P}_{\text s}(i) \big]^{n_iB_i},\end{equation}
where $\text{P}_{\text s}(i)$ has been derived in theorem \ref{t3}.

%where $\text{P}_{\text s}(i,\text{trial }j)$ represents probability of success for type-$k$, derived in theorem \ref{t3} and denoted by $\text{P}_\text{s}(i)$, for transmit power used in the   $j$th trial.

 \subsection{Battery Lifetime Performance (Durability)}\label{bl}
\subsubsection{Device-level Battery Lifetime} 
Packet generation at each device for most reporting IoT applications can be seen as a Poisson process \cite{3g}. Then, one can  model energy consumption of a device as a semi-regenerative process where the regeneration point  has been  located at the end of each successful data transmission epoch \cite{nL}. For a given device of type-$i$, let us denote the distance to $\ell$th neighboring AP as $d_{\ell}$, the stored energy in batteries as $E_{0}$, static energy consumption per reporting period for data acquisition from environment and processing as $E_{\text{st}}$, circuit power consumption in transmission mode as $P_{c}$, and inverse of power amplifier efficiency as $\eta$. Then, the expected battery lifetime  is \cite{nL}: 
\begin{equation} 
L(i)= \frac{E_{0}}{{E_{\text{st}}+\beta_i E_\text{c}+   \beta_in_i (\eta P_{i}+ P_{\text c}) \tau_i}}T_i,\label{lif}
\end{equation}
where $E_\text{c}$  represents the average energy consumption in listening after each trial  for ACK reception, and $\beta_i$ represents the average number of  trials and is derived as:
$$\beta_i=\sum\limits_{j=1}^{B_i}j\bigg[1 {-}\big[\prod_{\ell}1-\text{p}_{\text{s}}(i,d_{\ell} )\big]^{n_i}\bigg]\big[\prod_{\ell}1 {-}\text{p}_{\text s}(i,d_{\ell} )\big]^{n_i[j-1]},$$
where $\text{p}_{\text s}(i,r)$ has been derived in theorem \ref{t1}.
%$$\beta_i=\sum\limits_{j=1}^{B_i}j\big[1\text{-}\big(1\text{-}\text{P}_{\text{s}}(k,\text{trial }j)\big)^{n_i}\big]\prod\limits_{k=1}^{j\text{-}1}\big(1\text{-}\text{P}_{\text s}(i,\text{trial }k)\big)^{n_i}.$$
\subsubsection{Applications-level Battery Lifetime} \label{albl}
The lifetime of a reporting IoT application can be defined as the length of time between the reference time and when application is considered to be nonfunctional. The instant at which an application becomes nonfunctional is dependent on  the correlation between gathered data by neighboring devices. In critical applications with sparse deployment of sensors, where losing even one node deteriorates the performance or coverage, the shortest individual battery lifetime (SIBL) may define the application lifetime. When correlation amongst gathered data by neighboring nodes is higher, the longest
individual battery lifetime (LIBL), or average individual battery lifetime (AIBL) might be defined as the application lifetime.
Using AIBL definition, the expected battery lifetime for IoT application of type $k$ can be   approximated as $\mathbb L(i)\approx L(i)\big|_{\beta_i=\hat\beta_i}$, 
where:
$$\hat \beta_i=\sum\nolimits_{j=1}^{B_i}j\big[1\text{-}[1\text{-}\text{P}_{\text{s}}(i )]^{n_i}\big]\big[1\text{-}\text{P}_{\text s}(i )\big]^{n_i[j-1]},$$
and $L(i)$ and $\text{P}_{\text s}(i)$ have been derived \eqref{lif}, and  theorem \ref{t3} respectively.

 \subsection{Access Network's Cost}
The  access network's cost can be modeled as sum of spectrum, infrastructure, and operation costs.  Then,  the total annual cost  in a service area of $\mathcal A$, is derived as \cite{sib}:
\begin{align}
C_{\text{tot}} {=}c_1 \lambda_{\text a} \mathcal A {+}&c_2\lambda_\text{a} \mathcal A E_{\text{cons}} {+} {c_3}{W},
\label{cost}\end{align}
where  $c_1$ [\euro/AP] is the annual cost per AP  excluding the energy cost,  $c_2$ [\euro/Joule] is the annual cost for energy, and $c_3$[\euro/Hz] is the annualized spectrum cost, which is zero for unlicensed band. Also, $E_{\text{cons}}$ is the energy consumption per unit time, and depending on the type of the LPWA network, could be modeled in different ways. A proposed modeling for AP energy consumption per unit time is as
$E_{\text{cons}}=P_{\text r}+ P_{\text a}\sum\nolimits_{k\in\phi}\Lambda_k $, in which  $P_{\text r}$ is the load-independent power consumption, e.g. in listening to the channel and processing, $P_{\text a}$ the load-dependent power consumption  in forwarding received data to the core network and responding to the sender if required, and $\Lambda_k$ is the arrival rate of packets of type-$k$  devices deployed in the coverage area of an AP. When coverage areas of different APs are not overlapping, $\Lambda_k$ can be described as $\lambda_k\upsilon_k /[\lambda_\text{a} T_k]$. 
%e.g. zero for ISM-band IoT solutions

\section{Optimized Resource Provisioning  and Operation Control Strategies}

 \subsection{Tradeoff Analysis} \label{trf}
From the reliability expression in \eqref{rel} and theorem \ref{t3}, one sees that probability of success for type-$i$ traffic increases in (i) increase in the density of the APs, i.e. $\lambda_{\text a}$, which reduces the average communication distance; (ii) increase in bandwidth of communication, i.e. $W$, which reduces the probability of collision;  (iii) increase in transmit power of type-$i$ devices, i.e. $P_i$, and (iv) increase in number of replicas per packet, i.e. $n_i$. 
Let first focus of the first two items. \eqref{cost} shows that network cost increases with increase in the density of the APs and bandwidth of communication (in the case of licensed spectrum). Then, there is a clear tradeoff between reliability of provided communications and the investment cost. Furthermore, regarding the fact that the impacts of bandwidth and AP density on the reliability of communications are not the same, there should exist an optimal investment strategy to minimize the cost while complying with  the reliability constraints, as we will  show in section \ref{simsec}. 

Now we focus on the third and fourth aforementioned items for increasing  reliability of communications. From the battery lifetime analysis in \eqref{lif}, one sees that lifetime of devices may decrease in $n_i$ and $P_i$ because of the potential increase in the energy consumption per reporting period.   Furthermore, when reliability of communication is lower than a certain level, increase in $n_i$ and $ {P}_i$ may decrease the need for listening to the channel for ACK arrival and retransmissions, and hence, increasing $n_i$ or $ {P}_i$ may increase the battery lifetime. Taking this into account, one sees there should be an operation point beyond which, increase in $ {P}_i$ and $n_i$ offers a tradeoff between reliability and lifetime, and before which, it increases both reliability and durability of communication. This observation will be confirmed by simulation results in section \ref{simsec}.    From the above discussion, one sees that the design objectives, i.e. network cost, battery lifetime, and reliability of communications, cannot be treated separately in resource provisioning and operation control problems because they are coupled in conflicting ways such that improvements in one objective may lead to deterioration of the others. 
%%%%%%%%%%%%-------------------------

\subsection{Optimized Resource Provisioning}\label{orp}
As mentioned above, increasing $W$ and $\lambda_{\text a}$ have different impacts on reliability of communications as well as they result in different cost levels for the access network. Then, an interesting research problem consists in deriving the  optimized amount of investment in densification and   spectrum leasing, i.e.:
\begin{align}
\mini_{{\lambda_{\text a}, W}}&~C_{\text{tot}}  \buildrel \Delta \over =  [c_1  \mathcal A {+}c_2 \mathcal A E_{\text{cons}}]\lambda_{\text a} {+} {c_3}{W}\label{op1}\\
&\text{s.t.:} ~\text{P}_{\text o}(i)\le \text{P}_{\text o}^{\text{req}}(i), \forall i\in \phi. \nonumber
\end{align}
where $\text{P}_{\text{s}}^{\text{req}}(i)$ is the required reliability level.  Then, from the constraint is \eqref{op1}, we have:
\begin{align}
\text{P}_{\text o}(i)=\big[1-\text{P}_{\text s}(i) \big]^{n_iB_i}&\le {\text{P}_{\text o}^{\text{req}}}(i) \nonumber\\
1-\text{P}_{\text s}(i) &\le \sqrt[n_iB_i]{\text{P}_{\text o}^{\text{req}}(i)}
\nonumber\\
{\text{P}_{\text s}^{\text{req}}}(i) \buildrel \Delta \over = 1-\sqrt[n_iB_i]{\text{P}_{\text o}^{\text{req}}}(i)&\le \text{P}_{\text s}(i)
\label{cons}.
\end{align}
Now, by using the $\text{P}_{\text s}(i)$ expression in theorem \ref{t3} with\footnote{Following the same procedure, the results can be derived from any $\delta$. } $\delta=2$, and satisfying the  constraint in \eqref{cons} by equality, we have:
\begin{align}
\text{P}_{\text s}^{\text{req}}(i)&= \int\nolimits_{0}^{\infty} X_0 \exp(\text{-}X_5r^{2})2rdr\nonumber\\
&=\frac{0.5\sqrt{\pi}{\lambda_{\text a}\pi}\exp\big(-{\hat\upsilon_{i,2}} \big)}{\sum_{k}\lambda_{k}\hat \upsilon_{k,2} 
  (\frac{P_k\gamma_{\text{th}}}{P_i\Omega   })^{0.5} \frac{\pi^{2}}{2} \text{csc}(\frac{\pi}{2})\text{+}\lambda_{\text a}\pi\text{+}\frac{ \mathcal N\gamma_{\text{th}}}{\Omega  P_i \alpha}},\label{den}
\end{align}
in which, $\ell_{\max}=1$ and  $\mathcal Q\approx1$ have been assumed for brevity of expressions. Also, 
  $X_5$ is an auxiliary variable equal to the denominator of \eqref{den}.
This expression can be rewritten as:
\begin{align}
{\frac{0.5\sqrt{\pi}{\lambda_{\text a} }}{\text{P}_{\text s}^{\text{req}}(i)}\exp(-\frac{A_1}{W})}={\frac{A_2}{W}+\lambda_{\text a}+\frac{ \mathcal N\gamma_{\text{th}}}{\pi\Omega  P_i \alpha}},\nonumber
\end{align}
where:
$$A_1={\upsilon_{i}\frac{n_i\tau_i}{T_i} \frac{\varpi -1}{\varpi } }w_i;$$
$$ A_2=\sum_{k}{\upsilon_{k}\frac{n_k\tau_i}{T_k} \frac{\varpi -1}{\varpi } }w_k\lambda_{\text i}(\frac{P_k\gamma_{\text{th}}}{P_i\Omega   })^{0.5} \frac{\pi}{2} \text{csc}(\frac{\pi}{2}).$$
Solving this equation for $\lambda_{\text a}$, we derive an important expression for bandwidth-AP density tradeoff, as follows:
\begin{equation}\label{eqt}\lambda_{\text a}=\frac{\frac{ \mathcal N\gamma_{\text{th}}}{\pi\Omega  P_i \alpha}+\frac{A_2}{W}}{ \frac{0.5\sqrt{\pi}}{{\text{P}_{\text s}}^{\text{req}}} \exp(\frac{A_1}{W})-1}.\end{equation}
This expression shed light on the interconnection between the required bandwidth and AP density in providing a required level of reliability for IoT communications.  Using \eqref{eqt}, the optimization problem in \eqref{op1} reduces to a simple search over $W\ge 
{\log(\frac{0.5\sqrt{\pi}}{ {\text{P}_{\text s}}^{\text{req}}})}/{A_1}$ for minimizing:
\begin{align}
 [c_1  \mathcal A {+}c_2 \mathcal A E_{\text{cons}}]\max_{i\in\phi}\bigg\{\frac{\frac{ \gamma_{\text{th}}\mathcal N}{\pi\Omega  P_i \alpha}+\frac{A_2}{W}}{\frac{0.5\sqrt{\pi}}{   {\text{P}_{\text s}}^{\text{req}}(i)  \exp(\frac{A_1}{W})}\text{-}1} \bigg\}{+} {c_3}{W}.\label{reop}
\end{align}
 In \eqref{reop}, the multiplicand of $C_0$ represents the AP density based on the QoS requirement of the  most critical IoT-type.  Fig. \ref{costtr} represents the  tradeoff presented in \eqref{eqt} as well as the objective cost function in \eqref{reop} to be minimized, which is a quasi-convex function in this case. One sees as the required density of of APs decreases by increase in the system bandwidth, the network cost decreases to some point, and beyond which, network cost increases in system bandwidth. The point at which change in behavior of cost function occurs is a function of cost factors, i.e. $c_1, c_2, c_3$, and moves towards right-side in Fig. \ref{costtr} by decrease in $\frac{c_3}{c_1}$.

   \begin{figure}[t!]
 	\centering
 	\includegraphics[width=3.5in]{./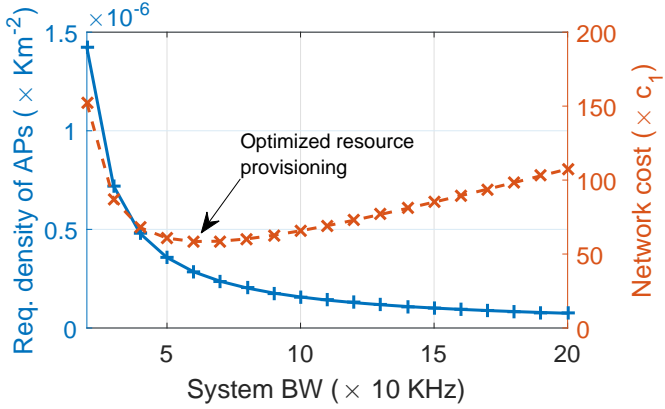}
 	\caption{The optimized resource provisioning problem. Required $ \text{p}_{\text s}(1,d_{\text{eg}})=0.5$, $\lambda_1$=1.6,  and other parameters can be found in Table \ref{sim}. }
 	\label{costtr}
 \end{figure} 
\subsection{Optimized Operation Control}
Increasing number of replicas per message as well as the transmission power can increase the reliability of communications, as discussed in subsection \ref{rels}. On the other hand, they may also increase or decrease the battery lifetime, as discussed in subsection \ref{trf}. Furthermore, they can affect reliability and battery lifetime performance of other IoT types due to the fact the received interference by other devices increase as transmit power or number of transactions of one type increase. Then, finding the optimized operation points is of paramount importance, as presented in the following. 
 Using the application-level battery lifetime definition in subsection \eqref{albl}, one may define the optimization problem for deriving the optimized operating points as follows:

\begin{align}
\maxi_{n_i, P_i, i\in\phi} & \hspace{3mm}\mathbb L(i); \label{op2}\\
&\text{s.t.:} ~\text{P}_{\text o}(i)\le \text{P}_{\text o}^{\text{req}}(i), n_i\le n_{max}, P_i\le P_{\max} \nonumber
\end{align}
Using \eqref{cons}, the reliability constraint can be rewritten as:
\begin{equation}\label{con}1-\sqrt[n_iB_i]{\text{P}_{\text o}^{\text{req}}}(i)\le \text{P}_{\text s}(i),\end{equation}
 where $\text{P}_{\text{s}}(i)$ has been derived in \eqref{den} as:
\begin{equation}\text{P}_{\text{s}}(i)=\frac{D_0}{\frac{1}{\sqrt{P_i}}D_1+\lambda_{\text a}\pi+\frac{ \mathcal N\gamma_{\text{th}}}{P_i\Omega   \alpha}},\label{rpi}\end{equation}
and the auxiliary variables $D_0$ and $D_1$ are defined as:
$$D_0=0.5\sqrt{\pi}{\lambda_{\text a}\pi}\exp\big(-{\hat\upsilon_{i,2}} \big),$$
 $$  D_1=\sum\nolimits_{k}\lambda_{k}\hat \upsilon_{k,2} 
  (\frac{P_k\gamma_{\text{th}}}{\Omega   })^{0.5} \frac{\pi^{2}}{2} \text{csc}(\frac{\pi}{2}).$$
Satisfying  \eqref{con} with equality, we have:
 $$\sqrt[n_iB_i]{\text{P}_{\text o}^{\text{req}}}(i)=1- \frac{D_0}{\frac{1}{\sqrt{P_i}}D_1+\lambda_{\text a}\pi+\frac{ \mathcal N\gamma_{\text{th}}}{P_i\Omega   \alpha}}.$$
 By simplifying the expression, $n_i$ is derived as a function of $B_i$ as follows:
\begin{equation} {n_i}=\left \lceil {\log(\sqrt[B_i]{\text{P}_{\text o}^{\text{req}}})}\bigg/{\log(1- \frac{D_0}{\frac{1}{\sqrt{P_i}}D_1+\lambda_{\text a}\pi+\frac{ \mathcal N\gamma_{\text{th}}}{P_i\Omega   \alpha}})}\right \rceil.\label{rni}\end{equation}
Also, the constraint on $n_i$ is translated to a constraint on $P_i$ as:
$$P_i\ge P_{min} \buildrel \Delta \over =\big(\frac{-{D_1}\text{+}\sqrt{{D_1}^2\text{-}4\frac{\mathcal N\gamma_{\text{th}}}{\Omega \pi}(\lambda_{\text a}\pi\text{-}\frac{D_0}{1\text{-}\sqrt[n_{\max}B_i]{\text{P}_{\text o}^{\text{req}}}})}}{2(\lambda_{\text a}\pi\text{-}\frac{D_0}{1-\sqrt[n_{\max}B_i]{\text{P}_{\text o}^{\text {req}}}})}\bigg)^2.$$
 Then, the optimization problem in \eqref{op2} reduces to a simple search over $ P_{min} \le \mathcal{P}_i\le P_{\max} $ for minimization of\footnote{Minimization of the expression in \eqref{ecprp} is equivalent to maximization of the battery lifetime expression in \eqref{lif}.}:
 \begin{equation} \label{ecprp}
 {\hat \beta_i E_\text{c}+   \hat \beta_in_i (\eta P_{i}+ P_{\text c}) \tau_i},
\end{equation}
in which $n_i$ has  been found as a function of $P_i$ in \eqref{rni},   
$$\hat \beta_i=\sum\nolimits_{j=1}^{B_i}j\big[1\text{-}[1\text{-}\text{P}_{\text{s}}(i )]^{n_i}\big]\big[1\text{-}\text{P}_{\text s}(i )\big]^{n_i[j-1]},$$
and $\text{P}_{\text s} (i)$ has been found as a function of $P_i$ in \eqref{rpi}.  Example of optimized operation control can be seen in Fig. \ref{oo}. In the next section, we further investigate the operation control and resource provisioning optimization problems, and   provide numerical results to show usefulness of the derived expressions in IoT network planning and optimization.

 %%------------------------------------------------------------

\begin{table}[t!]
\centering \caption{Simulation Parameters   }\label{sim}
\begin{tabular}{p{3.5 cm}p{4.6 cm}}\\
\toprule[0.5mm]
{\it Parameters }&{\it Value}\\
\midrule[0.5mm]
Service area &  $20\times 20 \text{ Km}^2$\\
Pathloss & $133+38.3\log (\frac{x}{1000})$\\
Thermal noise power & $-174$ dBm/Hz \\
Distribution process of devices & PCP$\big(\lambda_i\times$1e-6,$200$, Eq. \eqref{nor} with $\sigma$=100)\\ 
Packet arrival of each device & Poisson distributed with average reporting period ($T_i$) of 300 s\\
Packet transmission time ($\tau_i$) & 100 ms\\
Signal BW& 10 KHz\\
$E_0,P_{\text c}, \eta, E_{\text{st}}=0.5 E_{\text c}$& 1000 J, 10 mW, 0.5, 126 mW, 0.1 J\\
$P_{\text r}$, $P_{\text a}$ & 0.5 W, 1.5 W\\
$\gamma_{\text{th}}$, $\varpi$, $\eta$&1,1,0.5\\
$P_i, n_i,\lambda_{a}, W$ &  Default: 21 dBm, 1, 5.5{\rm e}-8, 100 KHz  \\
$\ell_{\max},\mathcal Q$&1, 0\\
%$d_{\text {eg}}$& $\sqrt{1/(\pi \lambda_{\text a})}$\\
$c_2,c_3$&$c_1/2000$, $c_1/2270$\\
\bottomrule[0.5mm]
\end{tabular}
\end{table}

 \section{Performance Evaluation}\label{simsec}

In this section we present the performance evaluation results. Towards this end, we implement a simulator for a large-scale IoT network with $K$-type IoT devices in MATLAB. Different IoT types differ in distribution processes of locations of devices, and communication characteristics. For type $i$, the distribution process of locations of respective devices is characterized by a PCP with density of parent points $\lambda_i$ (in Km$^{-2}$), $\upsilon_i=200$, and $f(x)$ given in \eqref{nor} with $\sigma$=100. The reliability constraint is described as $\text{p}_{\text s}(i,d_{\text{eg}})$, where $d_{\text{eg}}=\sqrt{{1}/{\pi\lambda_{\text a}}}$ is equivalent to the cell-edge communications distance in case of grid deployment of APs. The packet arrival at each node follows a Poisson process with rate $\frac{1}{T_i}$.   The default value of remaining  parameters can be found in Table \ref{sim}.

 \subsection{Validation of Derived Analytical Expressions}
 First, we investigate tightness of derived analytical expressions. By considering an IoT network comprising of two IoT types with different distributions and transmit powers, Fig. \ref{val} represents probability of success in packet transmission  for type-1 as a function of distance from the AP. One sees that the analytical model matches well with the simulation results. We have further depicted the contributions of noise, interferences from the same and other clusters of type-1 devices\footnote{Recall from \eqref{taj} in which we decomposed the received interference.}, as well as interference from type-2 devices. Regarding the fact transmit power of type-2 devices is 4 dB higher than type-1 devices, it is clear that  interference from type-2 traffic (plus-marked curve) is the most limiting factor. 
 
 \begin{figure}[t!]
 	\centering
 	\includegraphics[width=3.5in]{./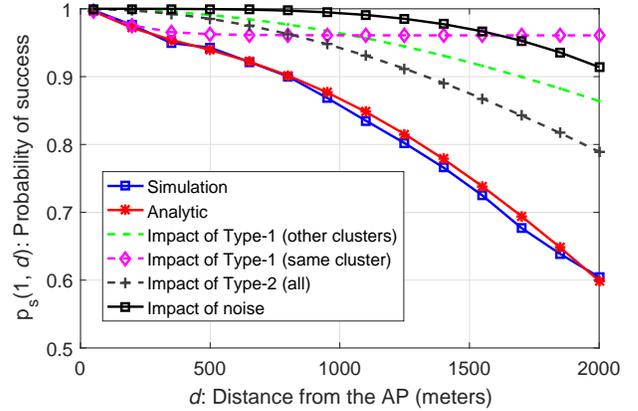}
 	\caption{Validation of analytical and simulation results. Device distribution:  $K$=$2$, $\lambda_1$=0.19, $ \lambda_2$=3.8, $\upsilon_1$=1200, $\upsilon_2$=30, $P_1$=21 dBm, and $P_2$=25 dBm. }
 	\label{val}
 \end{figure}

  \subsection{Analysis of Performance Tradeoffs}
  
   \begin{figure}[t!]
 	\centering
 	\includegraphics[width=3.5in]{./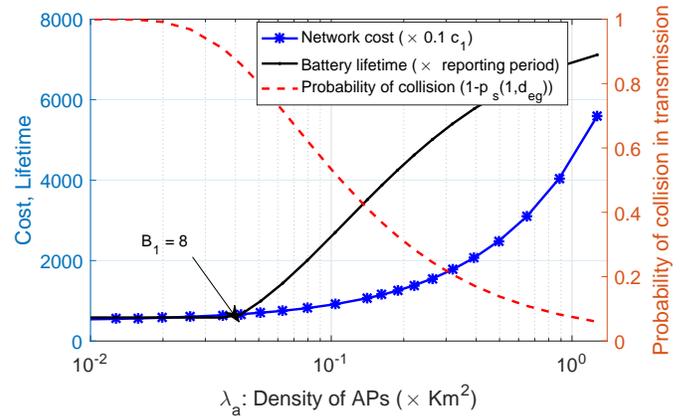}
 	\caption{Tradeoff between network cost, reliability, and battery lifetime ($K=1, \lambda_1$=6.4). For $\lambda_a\le 0.4$ Km$^2$, the maximum number of retransmissions, i.e. $B_1=8$, is met, i.e. most packets are dropped.   }
 	\label{pt}
 \end{figure} 
 
 Fig. \ref{pt} represents the tardeoff between cost of the access network (left $y$-axis), durability of communications (left $y$-axis), and reliability of communications.  The $x$-axis represents the density of APs. One sees by increase in density of APs, probability of collision in packet transmission decreases which results in  less required number of retransmissions. Then, it is clear that battery lifetime can be significantly saved by provisioning more resources for IoT traffic. This on the other hand increases network cost. One sees that the objectives of networks design are coupled such that improve  in one objective deteriorate the other. Hence, it is of crucial importance to provision network resources based on the actual need for QoS of IoT communications.
Further results on performance tradeoffs can be seen in subsection \ref{orp}. 
 
  \subsection{Optimized Deployment  Strategies}
 
 \begin{figure}[t!]
    \centering
    \begin{subfigure}[t]{0.5\textwidth}
   \centering%{0.2cm 0.01cm 1cm 0.1cm}trim=,clip,
 	\includegraphics[width=3.5in]{./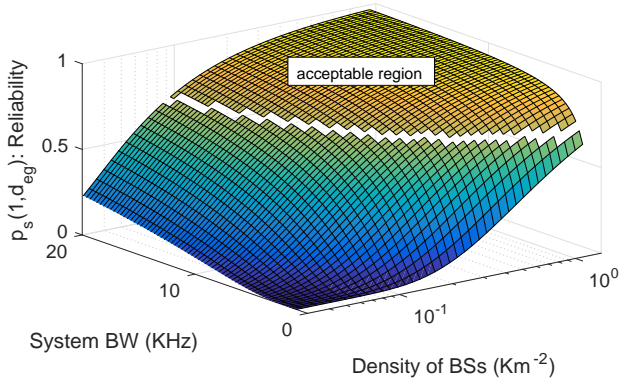}
 	\caption{Specifying region in which reliability constraint is satisfied.}
 	\label{od1}
    \end{subfigure}%
\\
    \begin{subfigure}[t]{0.5\textwidth}
   \centering%{0.2cm 0.01cm 1cm 0.1cm}trim=,clip,
 	\includegraphics[width=3.5in]{./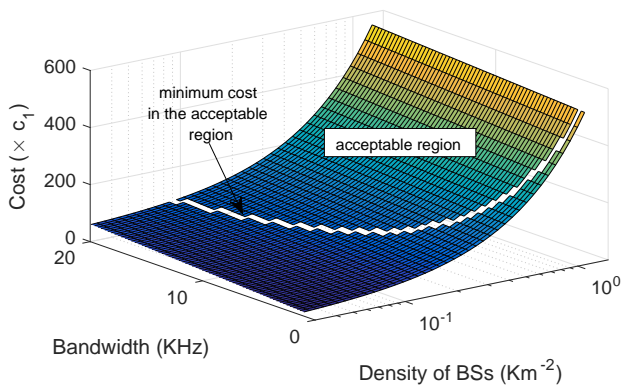}
 	\caption{Finding minimum-cost investment strategy  in the acceptable region.}
 	\label{od2}
    \end{subfigure}%
\caption{Optimized deployment strategy ($\lambda_1$=4.6,  required $p_{\text s}(1,d_{\text{eg}})$=0.7).    }  \label{od}
\end{figure}

Fig. \ref{od} represents the resource provisioning problem which has been partially touched in subsection \ref{orp}. First, Fig. \ref{od1} illustrates reliability of communications for different AP density-system BW configurations. One sees by increase in both AP density and system BW, probability of success in communications increases. Taking 0.7 as the required success probability, the density-bandwidth region in which reliability constraint is satisfied has been depicted in Fig. \ref{od1}. Using these results, in Fig. \ref{od2} we have depicted network cost as a function of provisioned bandwidth-density resources. Now, it is straightforward to search over the acceptable region to find the  bandwidth-density pair over which, network cost is minimized, as depicted in Fig. \ref{od2}.

  \subsubsection{Optimized Operation  Control} \label{sooc}

\begin{figure}[t!]
    \centering
    \begin{subfigure}[t]{0.5\textwidth}
   \centering%{0.2cm 0.01cm 1cm 0.1cm}trim=,clip,
  \includegraphics[width=3.5in]{./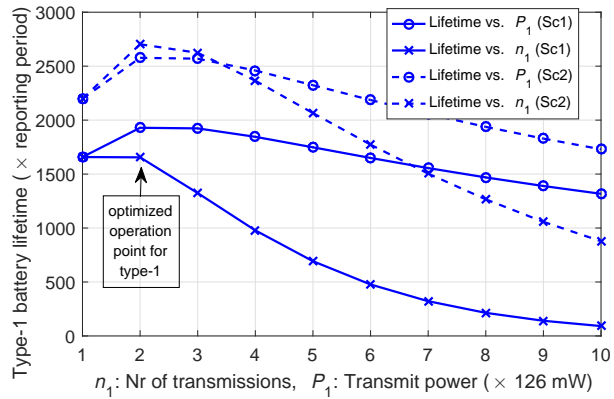}
\caption{Battery lifetime for type-1}\label{oo2}
    \end{subfigure}%
\\
    \begin{subfigure}[t]{0.5\textwidth}
   \centering%{0.2cm 0.01cm 1cm 0.1cm}trim=,clip,
    \includegraphics[trim={0.35cm 0.00cm 0cm 0cm},clip,width=3.5in]{./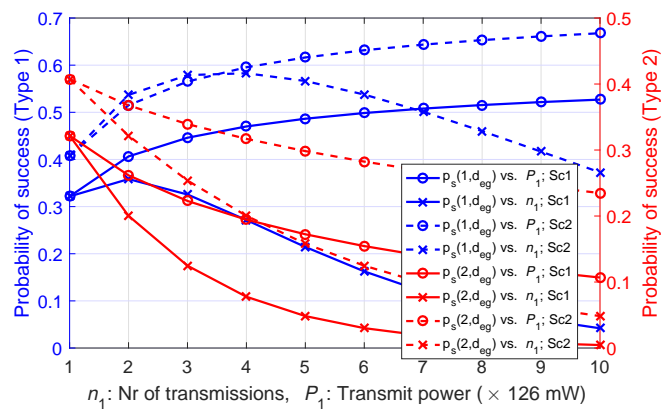}
\caption{Probability of success in transmission for  type-1 and type-2 devices}\label{oo1}
    \end{subfigure}%
\caption{Optimized operation control ($K=2, \lambda_2$=2.4, $\lambda_1$=2.4 in Sc1 and $\lambda_1$=1.2 in Sc2). $n_1$ and $P_1$ refer to the number of replica transmissions per data packet, and  transmit power respectively.  In circle-marked curves, $n_1=1$ and $P_1$ is varying. In  plus-marked curves, $P_1=126$ mW and $n_1$ is varying. Other parameters are presented in Table \ref{sim}. }  \label{oo}
\end{figure}

 Fig. \ref{oo} represents the  interactions amongst $n_i$, $P_i$, success probability, and battery lifetime. The $x$-axis in Fig. \ref{oo2} and Fig. \ref{oo1} represents  $P_i$ for circle-marked curves and $n_i$ for cross-marked curves. Also, Sc2 refer to a similar setting to Sc1 in which density of type-1 nodes has been reduced by a factor of 2. One sees in Fig. \ref{oo2} that battery lifetime is a quasi-concave function of both $P_i$ and $n_i$. Also, in Sc1, where density of nodes is higher than Sc2, battery lifetime decreases significantly by increase in number of replica transmissions. One sees that given parameters of our analysis,  the optimized operation strategy for type-1 is to send 2 replicas per generated data packet. Fig. \ref{oo1} represents the success probability for type-1 and type-2 traffic as a function of $n_1$ and $P_1$. One sees that success probability for type-1 increases to a point beyond which, the resulting interference from extra transmitted packets decline the improvement and deteriorate the performance. On the other hand, increase in the transmit power for type-1 devices,  increases the success probability for this type and severely decreases the performance of type-2 devices. It is also worthy to note that in Fig. \ref{oo1},  success probability increase in $n_i$ till $n_i=4$, however, from the battery lifetime analysis in Fig. \ref{oo2}, it is evident that battery lifetime decreases in $n_i$ for $n_i\ge 3$. This conclusion sheds light to the bound after which it is not feasible to trade battery lifetime for reliability.

  \subsubsection{Scalability Analysis}
Scalability analysis has been presented in Fig. \ref{sc}. Recall from section \ref{trf}, where we denoted the four degrees of freedom that can be leveraged to achieve a level of reliability in communications as i) transmit power, ii) number of replicas, iii) density of APs, and bandwidth of communications. First, Fig. \ref{scc} represents the way amount of provisioned network resources or devices resources must be scaled to comply with the increase in level of reliability in communications. It is clear that transmit power can be increase to a certain level in order to combat  noise. However, beyond a certain point increase in the transmit power cannot increase the success probability because it cannot combat interference. On the other hand, one sees that increase in number of replicas per packet can be used to increase reliability of communication. One must note that in scenarios with higher density of nodes, increasing number of replicas increases traffic load significantly and may even reduce reliability of communications. Finally, one sees the way increase in number of APs, and hence reducing the communications distance, or increasing the communications bandwidth, i.e. decreasing chance of collisions between nodes, can be used to achieve the required reliability level. 

Fig. \ref{sc2} represents the same results when $K=2$, i.e. two types of IoT devices are present in the service area, the required reliability level for type-1 traffic is 0.5, and $x$-axis represents density of type-2 devices (variable). Here, one sees that increasing transmit power in effective in combating both noise and interference\footnote{Recall the $\frac{P_k}{P_i}$ ratio in theorem \ref{t1}.}. It is also interesting to see that scaling bandwidth with the same rate as scaling the density of nodes can combat the extra interference due to the fact that the chance of collision scales down by scaling up the system bandwidth.

Fig. \ref{sc1} represent the scalability analysis for the case in which $K=1$, and $x$-axis represents density of nodes (variable). Similar to Fig. \ref{scc}, increasing transmit power cannot combat interference from the same type of nodes, and hence, it cannot be leveraged in adapting to the scaled network. On the other hand, increasing number of replica transmission can be useful to some extent because after some point, as we depicted in Fig. \ref{oo1}, increasing number of replicas  increases probability of collision significantly. 
\begin{figure}[t!]
    \centering
    \begin{subfigure}[t]{0.5\textwidth}
   \centering%{0.2cm 0.01cm 1cm 0.1cm}trim=,clip,
 	\includegraphics[width=3.5in]{./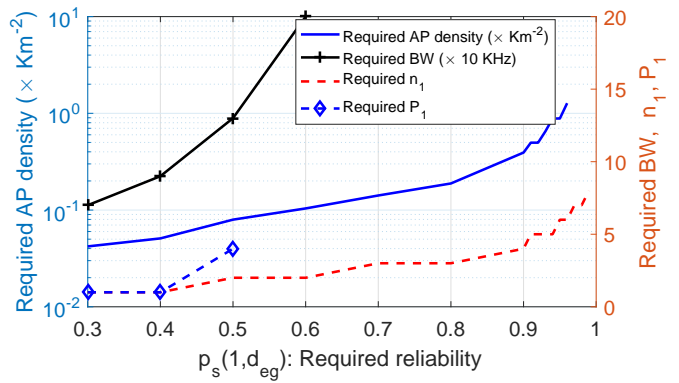}
 	\caption{Scalability analysis versus required reliability ($K=1$, $\lambda_1$=3.2).  }
 	\label{scc}
    \end{subfigure}%
\\
    \begin{subfigure}[t]{0.5\textwidth}
   \centering%{0.2cm 0.01cm 1cm 0.1cm}trim=,clip,
 	\includegraphics[width=3.5in]{./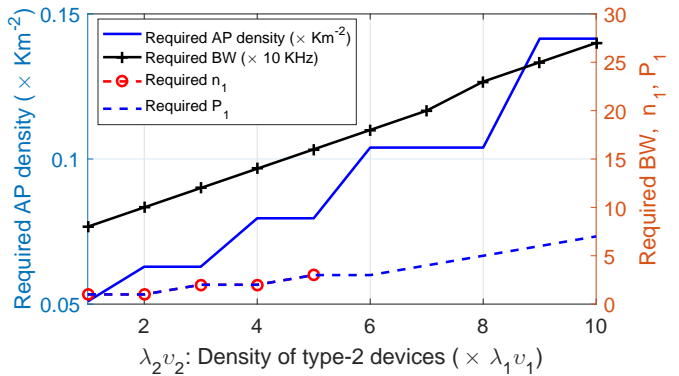}
 	\caption{Scalability analysis versus density of type-2 devices ($K=2$, $\lambda_1$=1.3, $P_1$=21 dBm, $P_2$=14 dBm, required $\text{p}_{\text s}(i,d_{\text{eg}})$=0.5).}
 	\label{sc2}
    \end{subfigure}%
\\
    \begin{subfigure}[t]{0.5\textwidth}
   \centering%{0.2cm 0.01cm 1cm 0.1cm}trim=,clip,
 	\includegraphics[width=3.5in]{./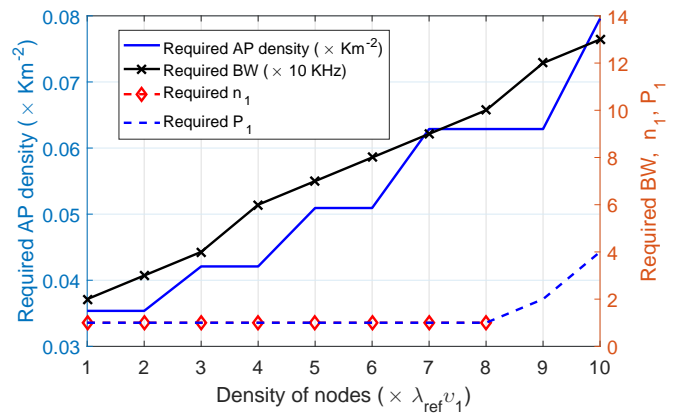}
 	\caption{Scalability analysis versus density of type-1 devices ($K=1$, $\lambda_{\text{ref}}$=0.3, required $\text{p}_{\text s}(i,d_{\text{eg}})$=0.5). Note: when the required value of one parameter for satisfying the reliability requirement is derived, the other parameters are set to the default value in Table \ref{sim}.}
 	\label{sc1}
    \end{subfigure}%
\caption{Scalability analysis}  \label{sc}
\end{figure}

 %%------------------------------------------------------------
 \section{Conclusion}
A tractable analytical model of reliability in  large-scale heterogeneous IoT networks has been presented. This model  has been employed subsequently in deriving tradeoff amongst cost of access network  and offered QoS to the  IoT traffic. The QoS for IoT traffic has been modeled in terms of reliability  and durability of communications. Using the derived results, optimized resource provisioning and operation control problems have been investigated, and optimized strategies aiming at lowering network cost while maximizing battery lifetime and complying with reliability constraints have been found. The derived expressions illustrate the way reliability in IoT connectivity can be  achieved by  sacrificing either battery lifetime or  sacrificing network cost, i.e. increasing amount of provisioned resources.
 The performance evaluation results  confirm existence of an optimal operation point before which, battery lifetime and reliability are both increasing in transmit power and number of replica transmission per packet; while beyond that point there is a tradeoff between reliability and battery lifetime. Furthermore, using the derived models the cost-optimized balance between provisioned radio and AP resources has been found. The accuracy yet tractability of the derived expressions in this work promotes use of them in network planning and optimization for future IoT networks.

%%%%%

\appendices

\section{Proof of Theorem \ref{t1}}\label{pt1}
When $m=1$, \eqref{suc} reduces to:
\begin{align}
&\hspace{2mm}\text{p}_{\text{s}}(i,{\bf z})= \exp\big(   \sum\limits _{j,k}-\lambda_k \int\nolimits_{\mathbb R^2} \big[1-\exp(-\hat\upsilon_{k,j} \mathcal U({\bf y}))\big]\bigg)\hspace{4mm}\nonumber\\
& \times \exp\big(-\frac{\mathcal N\gamma_{\text{th}}m}{\Omega  P_i \text{g}({\bf z})} \big) \times\int\nolimits_{\mathbb R^2} \exp\big(-{\textstyle \sum_j}\hat\upsilon_{i,j} \mathcal U({\bf y})\big)\text{f}({\bf y}) d {\bf y}
,\nonumber
\end{align}
where
$$\mathcal U({\bf y})\text{=}\int_{\mathbb R^2}\frac{\text{g}({\bf x}-{\bf y})}{ \text{g}({\bf x}-{\bf y})+ {\text{G}({\bf z})}}{\text{f}({\bf x}) d {\bf x}},$$
$$ \text{G}({\bf z})=\Omega  P_i \text{g}({\bf z})/[\gamma_{\text{th}} Q_jP_k].$$
Then, using Jensen's  inequality, we have:
\begin{align}
\text{p}_{\text{s}}(i,{\bf z}) \approx &   \text{P}_{{\text{\tiny N} }} \exp\big(  \sum\nolimits _{k,j}-\lambda_k\hat\upsilon_{k,j}\int\nolimits_{\mathbb R^2} \mathcal U({\bf y})\big)  d {\bf y} \nonumber\\
&\times\exp(-{ \sum\nolimits_{j}}\hat\upsilon_{i,j} \int\nolimits_{ \mathbb R^2} \mathcal U({\bf y})\text{f}({\bf y}) d {\bf y}),\nonumber\\
\approx &   \text{P}_{{\text{\tiny N} }} \exp\big(  \sum\limits _{k,j}\text{-}\lambda_k\hat\upsilon_{k,j}\int\nolimits_{\mathbb R^2} \frac{\text{g}({\bf y})}{\text{g}({\bf y})\text{+}\text{G}({\bf z})} d {\bf y}\int\nolimits_{\mathbb R^2}\text{f}({\bf v}) d {\bf v} \big)   \nonumber\\
&\times \exp\big(\text{-}{ \sum\limits_{j}}\hat\upsilon_{i,j} \int\limits_{\mathbb R^2}\int\limits_{\mathbb R^2} \frac{\text{g}({\bf x\text{-} y})}{\text{g}({\bf x\text {-}y})\text{+}\text{G}({\bf z})}\text{f}({\bf x}) d {\bf x}\text{f}({\bf y})   d {\bf y}\big)\nonumber\\
\approx &  \text{P}_{{\text{\tiny N} }}    \exp\big(-\sum\nolimits _{k,j}\lambda_k\hat\upsilon_{k,j}\int\nolimits_{\mathbb R^2} \frac{  \text{g}({\bf y})}{  \text{g}({\bf y})+  \text{G}({\bf z})} d {\bf y}\big)  \nonumber\\
&\times\exp\big(\text{-}{ \sum\limits_{j}}\hat\upsilon_{i,j}  \int\limits_{ \mathbb R^2} \int\limits_{ \mathbb R^2}\frac{  \text{g}({\bf v} )}{ \text{g}({\bf v} )\text{+}   \text{G}({\bf z})} \text{f}({\bf v}\text{+}{\bf y}) d {\bf v} \text{f}( {\bf y})  d {\bf y} \big),\nonumber
\end{align}
where $\bf v$ is an auxiliary variable equal to ${\bf x}-{\bf y}$. Using the isotropic property of  $\text{f}({\bf x})$, i.e.  $\text{f}({\bf y})=\text{f}(-{\bf y})$, and another change of variables, we have:
\begin{align}
\text{p}_{\text{s}}(i,{\bf z})\approx &  \text{P}_{{\text{\tiny N} }}    \exp\big(-\sum\nolimits _{k,j}\lambda_k\hat\upsilon_{k,j}\int\nolimits_{\mathbb R^2} \frac{  \text{g}({\bf y})}{  \text{g}({\bf y})+  \text{G}({\bf z})} d {\bf y}\big)  \nonumber\\
&\times\exp\big(\text{-}{ \sum\nolimits_{j}}\hat\upsilon_{i,j}  \int\nolimits_{ {\bf v}\in \mathbb R^2}\frac{  \text{g}({\bf v})}{ \text{g}({\bf v})\text{+}   \text{G}({\bf z})} \text{f}^*({\bf v}) d {\bf v} \big),\nonumber
\end{align}
where:
$$\int\nolimits_{{\bf y}\in \mathbb R^2} \text{f}({\bf v}\text{-}{\bf y})\text{f}( {\bf y}) d {\bf y}=\text{conv}(\text{f}({\bf v}),\text{f}({\bf v})) \buildrel \Delta \over = \text{f}^*({\bf v}).$$
By using $\text{H}(\cdot)$, as defined in \eqref{hf}, we have:
\begin{align}
\text{p}_{\text{s}}(i,{\bf z})&
\approx   \text{P}_{{\text{\tiny N} }} \big[\exp\big(-\sum\limits_{j\in\{1,2\}} \sum\limits_{k\in\mathcal K}\lambda_k\hat \upsilon_{k,j} \text{H}({\bf z},1, \frac{\gamma_{\text{th}}Q_j P_k}{\Omega  P_i})\big)\big]\nonumber\\
&\times\exp\big(-\sum\nolimits_{j\in\{1,2\}}{\hat\upsilon_{i,j}}  \text{H}({\bf z},\text{f}^*({\bf x}), \frac{Q_j\gamma_{\text{th}}}{\Omega })\big).\qed
\nonumber
\end{align}

%%%%%%%%%%%%%%%%
\section{Proof of Corollary \ref{cne}}\label{pcne}

For $\text{f}({\bf x})$ given in \eqref{nor}, $\text{f}^*({\bf x})$ is derived as:
$$\text{f}^*({\bf x})={\exp(-||{\bf x}-{\bf x}_0||^2/(4\sigma^2))}/{ {4\pi\sigma^2}}.$$ Then, we have:
\begin{align}
\text{H}\big({\bf z},\text{f}^*({\bf x}), \xi)&=\frac{2\pi}{4\pi\sigma^2}\int\nolimits_{r=0}^{\infty}\frac{\exp(-r^2/(4\sigma^2))}{1+\frac{ r^4}{\xi||{\bf z}||^4}}  d {r}\nonumber\\
&=\frac{\xi||{\bf z}||^4}{2\sigma^2  }\int\nolimits_{r=0}^{\infty}\frac{\exp(-r^2/(4\sigma^2))}{[{\sqrt \xi||{\bf z}||^2}]^2+ { r^4} }  r d {r}\nonumber\\
&\buildrel (\text{e}) \over =\frac{\xi||{\bf z}||^4}{4   \sigma^2}\int\nolimits_{x=0}^{\infty}\frac{\exp(-a_1 x )}{(a_2)^2+ { x^2} }  {   d x}\nonumber\\
&\buildrel (\text{f}) \over = \frac{\sqrt\xi||{\bf z}||^2   }{4  \sigma^2  }\big[\text{ci}( \frac{\sqrt{\xi}||{\bf z}||^2}{4\sigma^2}  )\sin( \frac{\sqrt{\xi}||{\bf z}||^2}{4\sigma^2}  )\nonumber\\
&\hspace{2cm}-\text{si}( \frac{\sqrt{\xi}||{\bf z}||^2}{4\sigma^2}  )\cos(  \frac{\sqrt{\xi}||{\bf z}||^2}{4\sigma^2} )\big],\nonumber
\end{align}
where in (e) we have used change of variables $x=r^2$, (f) has been derived using table of integrals in  \cite[Eq.~3.352]{seri}, and $a_1=\frac{1}{4\sigma^2}$ and $a_2= {\xi||{\bf z}||^4}$ are auxiliary variables. 
\qed

%%%%%%%%%%%%%%%%
\section{Proof of Theorem \ref{t3}}\label{pt3}
Using theorem \ref{t1}, corollary \ref{cne}, and corollary \ref{r1} we have:
\begin{align}
\text{p}_{\text{s}}(i,r)d \text{P}_{d_{\ell}}(r)
\approx &   \exp\big(\text{-}\frac{r^{4}\mathcal N\gamma_{\text{th}}}{\Omega  P_i \alpha} \big) \exp\big(\text{-}{\hat\upsilon_{i,2}} \big)\nonumber\\
 &\exp\big(\text{-}\sum\limits_{j,k} \lambda_k\hat \upsilon_{k,j} 
 r^2 \sqrt{\frac{\gamma_{\text{th}}Q_j P_k}{\Omega  P_i}} \frac{\pi^{2}}{2} \text{csc}(\frac{\pi}{2}) \big)\nonumber\\
 & \exp(\text{-}\lambda_{\text a}\pi r^2)\frac{2(\lambda_{\text a}\pi r^2)^{\ell}}{r({\ell}-1)!}dr\nonumber\\
 \approx &2rX_0 r^{2(\ell-1)}\exp(\text{-}[X_1r^{4}\text{+}X_2r^{2}])dr,\label{xs}
\end{align}
where $X_0$:$X_4$ have been defined in theorem \ref{t3}.
Inserting \eqref{xs} in \eqref{cov}, $\text{P}_\text{s}(i)$ is derived as:
\begin{align}
&1-\text{P}_\text{s}(i)\nonumber\\
&\approx  \prod\limits_{{\ell}} 1\text{-}\int\nolimits_{0}^{\infty} X_0 r^{2(\ell-1)}\exp(\text{-}[X_1r^{4}+X_2r^{2}])2rdr,\nonumber\\
 &\approx \prod\limits_{{\ell}}1\text{-} X_0 \int\nolimits_{0}^{\infty}  x^{(\ell-1)}\exp(-[X_1(x+\frac{X_2}{2X_1})^2\text{-}\frac{{X_2}^2}{4{X_1}^2} )])dx,\nonumber\\
 &\approx  \prod\limits_{{\ell}}1\text{-} X_0 \exp(\frac{{X_2}^2}{4{X_1}^2} )\int\nolimits_{\frac{{X_2}^2}{2X_1}}^{\infty}  (y-\frac{X_2}{2X_1})^{(\ell-1)}\exp(\text{-}X_1y^2)dy,\nonumber\\
  &\approx \prod\limits_{{\ell}}1\text{-} \frac{X_0}{\sqrt{{X_1}^{\ell-1}}} \exp(\frac{{X_2}^2}{4{X_1}^2} )\int\nolimits_{\frac{{X_2}^2}{2X_1}}^{\infty}  (z\text{-}\frac{X_2}{2\sqrt{X_1}})^{(\ell-1)}{\rm e}^{\text{-}z^2}dz,\nonumber\\
 &  \approx \prod\limits_{{\ell}} 1\text{-}\frac{X_0}{\sqrt{{X_1}^{\ell-1}}} \exp(\frac{{X_3}^2}{X_1} ) \mathcal G(X_3,\ell),
\end{align} 
 $x=r^2$, $y=x+\frac{X_2}{2X_1}$, and $z=\sqrt{X_1}y$ are auxiliary variables.
\qed

  %%------------------------------------------------------------
     \ifCLASSOPTIONcaptionsoff
  \newpage
\fi

\bibliographystyle{IEEEtran}
\bibliography{bibs}
 %%------------------------------------------------------------
 \end{document}